\newenvironment{sequation}{\begin{equation}\small}{\end{equation}}
\newtheorem{theorem}{\textbf{Theorem}}
\definecolor{b}{rgb}{0.0, 0, 1}
\definecolor{k}{rgb}{0, 0, 0}
\definecolor{c}{rgb}{1, 0, 0}
\def\BibTeX{{\rm B\kern-.05em{\sc i\kern-.025em b}\kern-.08em
		T\kern-.1667em\lower.7ex\hbox{E}\kern-.125emX}}
\begin{document}
\pagestyle{empty} 
\title{\fontsize{20pt}{24pt}\selectfont J$\text{C}^5$A: Service Delay Minimization for Aerial MEC-assisted Industrial Cyber-Physical Systems}
	
\author{Geng~Sun,~\IEEEmembership{Senior Member,~IEEE},
        Jiaxu Wu,
        Zemin Sun,~\IEEEmembership{Member,~IEEE}, 
        Long He,
        Jiacheng Wang, \\
        Dusit Niyato,~\IEEEmembership{Fellow,~IEEE},
        Abbas Jamalipour,~\IEEEmembership{Fellow,~IEEE}, and 
        Shiwen Mao,~\IEEEmembership{Fellow,~IEEE}
	\thanks{This study is supported in part by the National Natural Science Foundation of China (62172186, 62272194), and in part by the Science and Technology Development Plan Project of Jilin Province (20230201087GX). (\textit{Corresponding author: Zemin Sun.)}
    \par Geng Sun is with the College of Computer Science and Technology, Jilin University, Changchun 130012, China, and also with the Key Laboratory of Symbolic Computation and Knowledge Engineering of Ministry of Education, Jilin University, Changchun 130012, China. He is also with the College of Computing and Data Science, Nanyang Technological University, Singapore 639798 (E-mail: sungeng@jlu.edu.cn).
    \par Jiaxu Wu, Zemin Sun, and Long He are with the College of Computer Science and Technology, Jilin University, Changchun 130012, China, and also with the Key Laboratory of Symbolic Computation and Knowledge Engineering of Ministry of Education, Jilin University, Changchun 130012, China (E-mails: wjx15941904130@163.com, sunzemin@jlu.edu.cn, and helong0517@foxmail.com).
    \par Jiacheng Wang, Dusit Niyato is with the School of Computer Science and Engineering, Nanyang Technological University, Singapore 639798 (E-mail: jiacheng.wang@ntu.edu.sg, dniyato@ntu.edu.sg).
    \par Abbas Jamalipour is with the School of Electrical and Computer Engineering, the University of Sydney, Sydney, NSW 2006, Australia (E-mail: a.jamalipour@ieee.org).
    \par Shiwen Mao is with the Department of Electrical and Computer Engineering, Auburn University, Auburn, USA (E-mail: smao@ieee.org).
    }
    }

\markboth{Journal of \LaTeX\ Class Files,~Vol.~14, No.~8, August~2015}%
{Shell \MakeLowercase{\textit{et al.}}: Bare Demo of IEEEtran.cls for Computer Society Journals}	

%
%
 \IEEEtitleabstractindextext{
\begin{abstract}
In the era of the sixth generation (6G) and industrial Internet of Things (IIoT), an industrial cyber-physical system (ICPS) drives the proliferation of sensor devices and computing-intensive tasks. To address the limited resources of IIoT sensor devices, unmanned aerial vehicle (UAV)-assisted mobile edge computing (MEC) has emerged as a promising solution, providing flexible and cost-effective services in close proximity of IIoT sensor devices (ISDs). However, leveraging aerial MEC to meet the delay-sensitive and computation-intensive requirements of the ISDs could face several challenges, including the limited communication, computation and caching (3C) resources, stringent offloading requirements for 3C services, and constrained on-board energy of UAVs. To address these issues, we first present a collaborative aerial MEC-assisted ICPS architecture by incorporating the computing capabilities of the macro base station (MBS) and UAVs. We then formulate a service delay minimization optimization problem (SDMOP). Since the SDMOP is proved to be an NP-hard problem, we propose a 
 \underline{\textbf{j}}oint 
 \underline{\textbf{c}}omputation offloading,  \underline{\textbf{c}}aching,  \underline{\textbf{c}}ommunication resource allocation, \underline{\textbf{c}}omputation resource allocation, and UAV trajectory \underline{\textbf{c}}ontrol \underline{\textbf{a}}pproach (J$\text{C}^5$A). Specifically, J$\text{C}^5$A consists of a block successive upper bound minimization method of multipliers (BSUMM) for computation offloading and service caching, a convex optimization-based method for communication and computation resource allocation, and a successive convex approximation (SCA)-based method for UAV trajectory control. Moreover, we theoretically prove the convergence and polynomial complexity of J$\text{C}^5$A. Simulation results demonstrate that the proposed approach can achieve superior system performance compared to the benchmark approaches and algorithms.
\end{abstract}
  
\begin{IEEEkeywords}
	ICPS, offloading, caching, 
    communication and computation resource allocation, UAV trajectory control.
\end{IEEEkeywords}}

\maketitle
\IEEEdisplaynontitleabstractindextext
\IEEEpeerreviewmaketitle 
	%
	%
	\section{Introduction} 
	\label{sec:introduction}

\par Industry 5.0 marks a transformative era where human intelligence synergizes with advanced technologies to enhance the industrial efficiency and customization. Central to this shift is industrial cyber-physical systems (ICPS), which seamlessly integrate physical processes with computational and networking capabilities. Specifically, ICPS leverages the industrial Internet of Things (IIoT) to interconnect sensors, facilitating real-time monitoring, automatic control, and refined management across diverse industrial environments, which leads to an  exponential growth in multifarious sensor devices. Concurrently, the advancement of sixth generation (6G) technology has further propelled the proliferation of emerging mobile applications within ICPS, resulting in an increasing number of computation-intensive and delay-sensitive tasks such as virtual reality and online gaming. However, it is challenging to perform these tasks locally on IIoT sensor devices (ISDs) due to their constrained energy and computational capabilities and their physical sizes. To reduce the computational load on ISDs, cloud computing has been proposed as an effective solution. However, due to the geographical separation between the cloud infrastructure and ISDs, they may experience long communication latency.

\par Mobile edge computing (MEC)-assisted ICPS has been identified as a promising paradigm and has been extensively studied to offer low-latency offloading services close to ISDs. For example, Peng et al.~\cite{peng2022real} developed an MEC-assisted ICPS, where the terrestrial MEC nodes are deployed to support real-time transmission. Moreover, Ji et al.~\cite{ji2023intelligent} proposed an intelligent edge sensing and control framework by deploying MEC in the ICPS. However, these studies primarily focus on terrestrial MEC servers dependent on fixed ground infrastructures, which can result in frequent non-line-of-sight connections, high deployment costs, and limited environmental adaptability. This is particularly problematic in scenarios such as natural disasters, where deploying stationary infrastructures can be challenging.

\par To address the aforementioned challenges, the concept of unmanned aerial vehicle (UAV)-assisted MEC, i.e., aerial MEC, supported by features such as high maneuverability, flexibility, cost-effectiveness, and line-of-sight (LoS) connections, has been introduced to elevate the MEC facilities into the skies to enhance the flexibility of edge computing services. Consequently, by offloading computing tasks to the nearby UAVs, ISDs can flexibly enjoy cloud-like services and free themselves from the burden of heavy computation tasks anytime and anywhere. Recently, several studies investigated the aerial MEC-assisted ICPS. For example, Tang et al.~\cite{tang2023robust} considered a UAV-enabled ICPS, where a UAV is dispatched as an aerial edge server to assist IIoT data processing. Additionally, Shi et al.~\cite{Shi2024Deep} proposed an energy-constrained multi-UAV assisted ICPS, where multiple UAVs work together to adjust their frequencies based on the task sizes. However, most of these studies assume that the UAVs have relatively sufficient computing resources and can cache all services required by ISDs. However, this assumption may not be realistic due to the constrained resources of UAVs, especially in scenarios with dense offloading requirements. 

\par Fully exploring the benefits of UAV-assisted MEC to provide satisfactory offloading services for ICPS encounters significant challenges. \textbf{\textit{i) Resource management.}} In contrast to cloud servers with abundant resources, UAV-assisted MEC servers are typically equipped with limited communication, computation, and caching (3C) resources. However, the proliferation of ISDs, coupled with various delay-sensitive and computation-intensive applications, places unprecedented demands on 3C resources. Accordingly, the stringent 3C requirements of ISDs and the constrained 3C resources of MEC servers create difficulties in designing efficient 3C resource management to meet the demands of 3C-intensive tasks. \textbf{\textit{ii) Computing offloading.}} Different ISDs have heterogeneous offloading requirements for 3C services, while different MEC servers offer limited 3C resources. Moreover, random arrival of tasks lead to the spatiotemporal distribution of requirements, while the varying geographical deployment and capacities of different UAVs create a spatiotemporal distribution of 3C resources. Therefore, an effective computation offloading scheduling is crucial and challenging for the server load balancing. \textbf{\textit{iii) Trajectory control.}} While UAV-assisted MEC servers provide flexible 3C resources for computation offloading, a limited battery capacity of UAVs inherently restricts the service time, posing challenges for the energy-efficient UAV trajectory control.



\par The abovementioned challenges necessity efficient optimization of 3C resource management, computation offloading, and UAV trajectory control. However, focusing on just one aspect of these components is insufficient due to the following reasons. On the one hand, the optimization variables are mutually coupled. For example, the computation offloading decision depends on both the cached services and UAV locations. However, the limited storage capacity and energy of UAVs require efficient service caching and trajectory control. Moreover, the trajectory of a UAV also affects the communication quality and density of ISDs within its coverage, which in turn influences the communication and computing resource allocation. On the other hand, these optimization variables collectively determine the system performance. For example, offloading more tasks to nearby UAVs may reduce local computing delay but can lead to frequent caching and higher energy consumption for both computing and flying. Therefore, these interconnected optimization variables should be jointly optimized to achieve overall superior system performance, as it can effectively capture the intricate and coupling interactions and trade-offs among various optimization components. Consequently, we propose a collaborative optimization approach of computation offloading, service caching, communication resource allocation, computing resource allocation, and UAV trajectory control to enhance the performance of the ICPS. The main contributions of our work are summarized as follows:
\begin{itemize}
\item \textbf{\textit{Collaborative Aerial MEC-assisted Architecture.}} We propose a three-layer collaborative aerial MEC-assisted ICPS architecture, which consists of a set of ISDs, a collaborative UAV cluster, and a macro base station (MBS). Specifically, the UAVs act as the aerial MEC servers to collaboratively provide aerial computing services close to ISDs, and the MBS functions as the terrestrial MEC server to alleviate the overload of the UAV cluster.


\item  \textbf{\textit{Service Delay Optimization Problem Formulation.}} Considering the delay-sensitive requirements of the ISDs, we formulate a service delay minimization optimization problem (SDMOP). Specifically, the SDMOP aims to minimize the total delay of task completion under the energy constraints of ISDs and UAVs. Besides, we prove that this problem is an NP-hard and mixed integer nonlinear programming (MINLP) problem.

\item \textbf{\textit{Joint Optimization Approach.}} Since the formulated problem is difficult to be directly solved, we propose a \underline{\textbf{j}}oint \underline{\textbf{c}}omputation offloading,  \underline{\textbf{c}}aching,  \underline{\textbf{c}}ommunication resource allocation, \underline{\textbf{c}}omputing resource allocation, and UAV trajectory \underline{\textbf{c}}ontrol \textbf{a}pproach (J$\text{C}^5$A). J$\text{C}^5$A decomposes the original optimization problem into three subproblems, i.e., the computation offloading and service caching, the communication and computation resource allocation, and the UAV trajectory control. Specifically, for the subproblem of computation offloading and service caching, we employ the block successive upper bound minimization method of multipliers (BSUMM) to solve it. Moreover, the convex optimization methods are adopted to solve the subproblems of communication and computation resource allocation, and UAV trajectory control. Although we decompose the problem into three sub-problems to decouple the interdependent decision variables, we still achieve joint optimization, as a solution of one sub-problem will affect those of other sub-problems, thereby  preserving the joint benefits of considering these decisions within the same problem.

\item \textbf{\textit{Performance Validation.}} The effectiveness and performance of the
proposed J$\text{C}^5$A are validated through both theoretical
analysis and simulation experiment. We theoretically prove the convergence and polynomial complexity of J$\text{C}^5$A. Moreover, the simulation results demonstrate that J$\text{C}^5$A achieves superior performance than the comparative approaches and algorithms. 

\end{itemize}

\par The rest of this work is organized as follows. Section \ref{sec:Models and Preliminaries} presents the relevant models. Section \ref{sec:formulation} gives the problem formulation and analysis. Section \ref{sec:Algorithm} elaborates the proposed approach. Section \ref{sec:Simulation Results And Analysis} showcases the simulation results. Finally, the conclusions are presented in Section \ref{sec:Conclusion}.

\section{System Model}
\label{sec:Models and Preliminaries}

\par In this section, we illustrate the architecture of the collaborative aerial MEC-assisted ICPS and the system models.

%

\begin{figure}[t]
    \centering
    \setlength{\abovecaptionskip}{1pt}%
    \setlength{\belowcaptionskip}{1pt}%
    \includegraphics[width =3in]{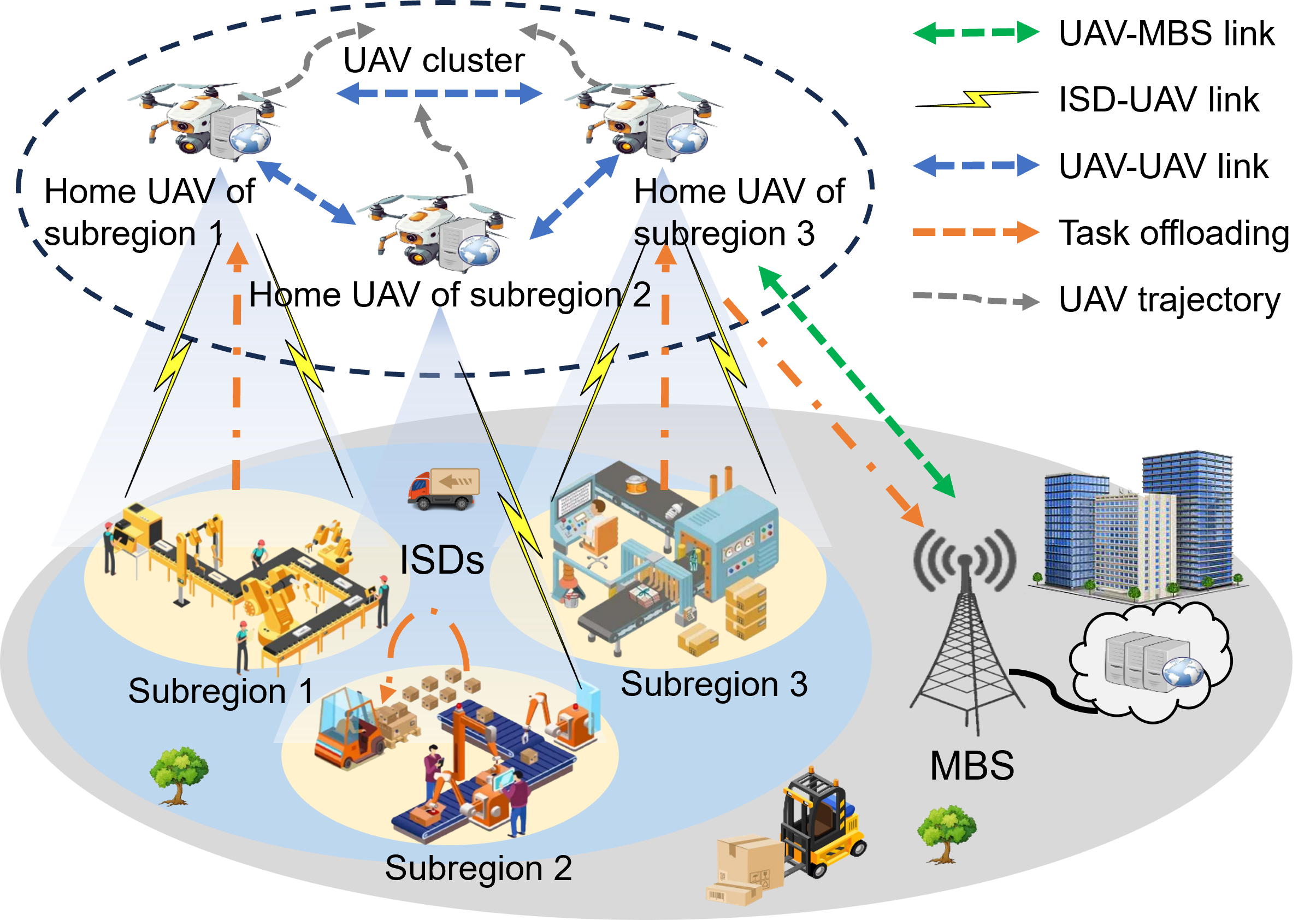}
    \caption{The collaborative aerial MEC-assisted ICPS.} 
\label{fig_SystemModel}
\end{figure}

\subsection{System Overview}
\label{subsec:System Model}

\par As shown in Fig. \ref{fig_SystemModel}, we consider a three-layer collaborative aerial MEC-assisted ICPS. The system consists of a device layer with $K$ ISDs, an aerial MEC layer with a cluster of $U$ UAVs, and a terrestrial MEC layer with an MBS. The system timeline is discretized into equal $N$ time slots, i.e., $\mathcal{N} = \{1,\ldots, n,\ldots, N\}$, where each slot duration $\tau$ is chosen to be sufficiently small such that each time slot can be considered to be quasi-static~\cite{Liu2022}.

\par \textit{At the device layer}, a set of ISDs $\mathcal{K}=\{1,\ldots, k,\ldots, K\}$ are randomly distributed in the considered area responsible for monitoring or performing the industrial activities and generate corresponding tasks. ISD $k$ is characterized by $(F_k^{\max}, E_k^{\max}, \boldsymbol{q}_k)$, where $F_k^{\max}$ denotes the local computing capability of ISD $k$, $E_k^{\max}$ indicates the energy constraint of ISD $k$, and $\boldsymbol{q}_k$ represents the horizontal position of ISD $k$. Moreover, we consider that each ISD could generate one computation task or not per time slot~\cite{Wang2022}, and the task $\boldsymbol{T}_k[n]$ generated by ISD $k$ in time slot $t$ can be characterized as 
\begin{sequation}
\label{equation_quadruples_of_taask}
    \boldsymbol{T}_k[n]=(d_k[n],s_k[n],c_k[n],t_k^{\max}), \ \forall k \in \mathcal{K},
\end{sequation}

\noindent where $d_k[n]$ denotes the task size (in bits), $s_k[n]$ indicates the required service type, $c_k[n]$ is the computational density (cycles/bit), and $t_k^{\max}$ represents the tolerable delay of the task. 

\par \textit{At the aerial MEC layer}, a set of rotary-wing UAVs $\mathcal{U}=\{1, \ldots, u,\ldots, U\}$ are deployed as aerial MEC servers\footnote{A UAV and an aerial MEC server will be used interchangeably.} to offer flexible edge computing services close to ISDs within the service area. To maximize the user capacity, the considered service area is initially divided into $U$ non-overlapping subregions, with each UAV responsible for one subregion. We denote $\mathcal{K}_u$ as the set of ISDs within the service area of UAV $u$, and refer to UAV $u$ as the home UAV of ISD $k\in \mathcal{K}_u$. Furthermore, these UAVs collaboratively form a cluster to share computing resources through horizontal computation offloading, with the resource allocation information stored in a shared resource allocation table~\cite{Ndikumana2020}. Moreover, each UAV $u\in\mathcal{U}$ is characterized by $(F_u^{\max},  E_u^{\max}, H_u, \boldsymbol{q}_u[n])$, wherein $F_u^{\max}$ denotes the computing capability of UAV $u$, $E_u^{\max}$ represents the energy constraint of UAV $u$, $C_u^{\max}$ is the service caching storage of UAV $u$, $H$ indicates the altitude of UAV $u$, and $\boldsymbol{q}_u[n]$ means the horizontal position of UAV $u$ at time slot $[n]$. We consider that each UAV $u\in\mathcal{U}$ flies at a fixed altitude to mitigate additional energy consumption associated with the frequent altitude changes~\cite{li2023robust}. In addition, the trajectory of each UAV should satisfy the physical constraints as follows: 
\begin{subequations}
\label{eq_UAV}
\begin{alignat}{2}
    &||\mathbf{q}_u[n+1]-\mathbf{q}_u[n]|| \leq V^{\max} \tau, \forall u, n,  \label{AAi}\\
    &||\mathbf{q}_u[n]-\mathbf{q}_v[n]|| \geq D^{\min},  \forall u \neq v, n, \label{AAj}\\    &\mathbf{q}_u[0]=\mathbf{q}_u^I , \forall u,  \label{AAk}
\end{alignat}
\end{subequations}

\noindent where $V^{\max}$ denotes the maximum velocity of UAVs, $D^{\min}$ represents the safe distance between UAVs, and $\mathbf{q}_u^I$ is the initial position of UAV $u$. Moreover, (\ref{AAi}) means that the flight distance of each UAV is constrained by the maximum velocity, (\ref{AAj}) guarantees the safe distance between UAVs, and (\ref{AAk}) limits the initial position of UAV $u$.


\par \textit{At the terrestrial MEC layer}, an MBS $M$ is connected to a terrestrial MEC server\footnote{An MBS and a terrestrial MEC server will be used interchangeably.} via a wired optical fiber link to provide stronger computing capabilities. Specifically, MBS $M$ is characterized by $(F_M^{\max},  E_M^{\max}, H_M, \boldsymbol{q}_M)$, wherein $F_M^{\max}$ represents the computing resources of the MBS, $E_M^{\max}$ denotes the energy constraint of the MBS, $H_M$ is the height of the MBS, and $\boldsymbol{q}_M$ is the horizontal position of the MBS.


\par The workflow of the considered ICPS can be described as follows. For task $\boldsymbol{T}_k[n]$ generated by ISD $k\in \mathcal{K}$ in time slot $n\in \mathcal{N}$, it can be executed locally at the ISD (i.e., local computing), offloaded to a UAV for execution (i.e., UAV-assisted computing), or offloaded to the MBS for execution (i.e., MBS-assisted computing)\footnote{The UAVs and the MBS are collectively referred to MEC servers.}. For local computing, the ISD performs the task based on its local computing capabilities. Moreover, for the UAV-assisted computing, the task is initially offloaded to the home UAV of the ISD, and the home UAV then allocates computing resources to execute the task or offloads the task to other UAVs within the UAV cluster for execution. Additionally, for the MBS-assisted computing, the task is relayed from the home UAV to the MBS for execution.

\subsection{Communication Model }
\label{sec_communication_model}

\par In the considered ICPS, three types of communication links are presented, i.e., ISD-UAV link, UAV-UAV link, and UAV-MBS link. Moreover, the orthogonal frequency-division multiple access (OFDMA) technique is employed for these communications to mitigate unreliable transmissions caused by interference. The aforementioned communication links are described as follows.

\subsubsection{ISD-UAV Link}
\par Considering the communication between ISDs and UAVs is dominated by LoS link \cite{Zhou2021}, the channel power gain between ISD $k$ and its home UAV $u$ in time slot $n$ can be given as
\begin{sequation}
    G_{k,u}[n]=\beta_0/(\|\boldsymbol{q}_k-\boldsymbol{q}_u[n]\|^2+H^2),
\end{sequation}

\noindent where $\beta_0$ denotes the channel power gain at a reference distance of 1 meter. Therefore, the data transmission rate from ISD $k$ to UAV $u$ can be calculated as
\begin{sequation}
    R_{k,u}[n]=\theta_{k,u}[n]B_{f}\log_2\big(1+P_{k}G_{k,u}[n]/\sigma^2\big),
\end{sequation}

\noindent where $B_{f}$ denotes the total bandwidth available for the ISD-UAV link, $P_{k}$ represents the transmit power of ISD $k$, $\sigma^2$ means the noise power, and $\theta_{k,u}[n]$ is the bandwidth allocation coefficient.

\subsubsection{UAV-UAV Link}

\par By considering that the communication between UAVs is dominated by the LoS link, the channel power gain between UAV $u$ and UAV $v$ can be given as
\begin{sequation}
    G_{u,v}[n]=\beta_0/\|\boldsymbol{q}_u[n]-\boldsymbol{q}_v[n]\|^2. 
\end{sequation}

\noindent Therefore, the data transmission rate from UAV $u$ to UAV $v$ can be obtained as
\begin{equation}
    R_{u,v}[n]=\theta_{u,v}[n]B_{c}\log_2(1+P_{u}G_{u,v}[n]/\sigma^2),
\end{equation}
\noindent where $B_{c}$ denotes the total bandwidth for the UAV-UAV link, $P_{u}$ is the transmit power of UAV $u$, and $\theta_{u,v}[n]$ represents the fraction of bandwidth which is allocated to the communication between UAV $u$ and UAV $v$.

\subsubsection{UAV-MBS Link}

\par The channel power gain between UAV $u$ and MBS $M$ is given as
\begin{sequation}
    G_{u,M}[n]=\beta_0/(\|\boldsymbol{q}_u[n]-\boldsymbol{q}_M\|^2+(H_u-H_M)^2).
\end{sequation}

\noindent Then the transmission rate between UAV $u$ and MBS $M$ can be calculated as
\begin{sequation}
\label{eq.R_u_M}
    R_{u,M}[n]=\theta_{u,M}[n]B_{b}\log_2(1+P_{u}G_{u,M}[n]/\sigma^2),
\end{sequation}

\noindent where $B_{b}$ denotes the total bandwidth available in the UAV-MBS link, and $\theta_{u,M}[n]$ is the fraction of bandwidth allocated to the communication between UAV $u$ and MBS $M$.

\subsection{Caching Model}
\label{sec_caching_model}

\par The service library of the MBS is denoted by $\mathcal{S}=\{1,\ldots,s,\ldots, S\}$, where the cache contents are divided into several units of the same size~\cite{Ji2020}. Moreover, the service caching of UAVs is initialized based on the ranking of content popularity. Specifically, the least recently used caching replacement policy is adopted to update the cache of UAVs from the MBS~\cite{Lee2001}.  Note that if UAV $u$ does not cache the content of service $s_k$, it is unable to process task $\boldsymbol{T}_k[n]$. Therefore, we introduce an additional constraint as follows:
\begin{sequation}
\label{eq_con_caching}
    o_k^u[n](1-z_k^u[n])=0, \forall k, u, n,
\end{sequation}

\noindent where the binary variables $o_k^u[n]\in\{0,1\}$ and $z_k^u[n]\in\{0,1\}$ denote the offloading decision for task $\boldsymbol{T}_k[n]$ and the caching decision of UAV $u$ in time slot $n$, respectively. Moreover, Eq. \eqref{eq_con_caching} implies that if task $\boldsymbol{T}_k[n]$ is offloaded to UAV $u$, i.e., $o_k^{u}[n]=1$, the service required by this task must have been cached by the UAV, i.e., $z_k^u[n]=1$.

\subsection{Service Delay Model}
\label{sec_service_delay}

\par The service delay for each task depends on the offloading decision. Specifically, the task $\boldsymbol{T}_k[n]$ can be processed locally on the ISD $k$ (referred to as local computing), directly offloaded to the home UAV $u$ (referred to as home UAV-assisted computing), transmitted from the home UAV $u$ to the MBS $M$ (referred to as MBS-assisted computing), or transmitted from the home UAV $u$ to the neighbor UAV $v$ (referred to as UAV collaborative computing). To this end, we define a set of binary variables $\{o^k_k[n], o^k_u[n],o^k_v[n],o^k_M[n]\} $ to represent the offloading decision of ISD $k$ at time slot $n$, where $o_k^k[n]=1$, $o_k^u[n]=1$, $o_k^v[n]=1$, and $o_k^M[n]=1$ indicate that the task is processed locally, offloaded to the home UAV $u$, offloaded to a neighboring UAV $v$, and offloaded to the MBS $M$, respectively. Note that for edge computing, we ignore the result feedback delay since the results of most mobile applications is typically much smaller than the input data~\cite{Wang2022}.

\subsubsection{Local Computing}

\par  When task $\boldsymbol{T}_k[n]$ is processed by ISD $k$ locally, the service delay can be given as follows:
\begin{sequation}
\label{eq_t_k_loc}
    t_k^{\text{loc}}[n]=d_k[n] c_k[n]/F_k^{\max}.
\end{sequation}

\noindent where $F_k^{\max}$ is the local computing capability of ISD $k$.

\subsubsection{Home UAV-assisted Computing}

\par When task $\boldsymbol{T}_k[n]$ is offloaded from ISD $k$ to the home UAV $u$ where the required service is cached, the service delay can be given as
\begin{sequation}
\label{eq_t_k_u}
    t_k^u[n] = d_k[n]/R_{k,u}[n]+d_k[n] c_k[n]f_k^u[n],
\end{sequation}

\noindent where $f_k^u[n]$ denotes the amount of computation resource allocated by the home UAV $u$ to task $\boldsymbol{T}_k[n]$ in time slot $n$.



\subsubsection{UAV Collaborative Computing}

\par When the home UAV $u$ is unable to handle task $\boldsymbol{T}_k[n]$, it checks the resource allocation table and offloads the task to UAV $v$ with available resources in the same UAV cluster. In this case, the service delay can be given as
\begin{sequation}
    \label{eq_t_k_v}
    t_k^v[n] = d_k[n]/R_{k,u}[n]+d_k[n]/R_{u,v}[n]+d_k[n] c_k[n]/f_k^v[n],
\end{sequation}


\noindent where $f_k^v[n]$ represents the amount of computation resource allocated by UAV $v$ to task $\boldsymbol{T}_k[n]$ in time slot $n$.

\subsubsection{MBS-assisted Computing}

\par If task $\boldsymbol{T}_k[n]$ is offloaded to the MBS $M$, the service delay can be calculated as
\begin{sequation}
\label{eq_t_k_M1}
    t_k^M[n] = d_k[n]/R_{k,u}[n]+d_k[n]/R_{u,M}[n]+d_k[n] c_k[n]/f_k^M[n],
\end{sequation}

\noindent where $f_k^M[n]$ is the amount of computation  resource allocated by MBS $M$ to task $\boldsymbol{T}_k[n]$ in time slot $n$.



\subsubsection{Total Service Delay}

\par Therefore, the total delay for processing task $\boldsymbol{T}_k[n]$ can be obtained as

\vspace{-0.8em}
{\small 
\begin{align}
  & t_k[n] =  \underbrace{o_k^k[n]t_k^{\text{loc}}[n]}_{\text{Local computing }}+\underbrace{o_k^u[n]t_k^u[n]}_{\text{Home UAV-assisted computing}} + \underbrace{\sum\limits_{\substack{v\in\mathcal{U},v\neq u}}o_k^v[n]t_k^v[n]}_{\text{UAV collaborative computing }}\notag\\&+\underbrace{o_k^M[n]t_k^M[n]}_{\text{MBS-assisted computing}}+\sum\limits_{u\in\mathcal{U}}\underbrace{\max\{z_k^u[n] - z_k^u[n-1],0\} D_{s}/\bar{R}}_{\text{Caching}}, 
\end{align}
}

\noindent where $\max\{z_k^u[n] - z_k^u[n-1],0\} \frac{D_{s}}{\bar{R}}$ denotes the backhaul delay for UAV $u$ to cache the service $s_k$ from the MBS, and $\bar{R}$ is the average transmission rate from the MBS to a UAV. 

\vspace{-0.6em}

\subsection{Energy Consumption Model}
\label{sec_energy_consumption}

\par The energy consumption of ISDs and UAVs depends on the offloading decision, which is detailed as follows.

\subsubsection{Energy consumption of ISDs}

\par The energy consumption of ISD $k$ can be calculated as
\begin{sequation}
\begin{aligned}
\label{eq_t_k_M}
    E_k[n] = \underbrace{o_k^k[n]E_{k}^{\text{com}}[n]}_{\text{Computation energy}}+\underbrace{(o_k^u[n]+\sum\limits_{\substack{v\in\mathcal{U},v\neq u}}o_k^v[n]+o_k^M[n])E_{k,u}^{\text{tran}}[n]}_{\text{Transmission energy}}.
\end{aligned}
\end{sequation}

\noindent First, $E_{k}^{\text{com}}[n]=\varpi_k(F_k^{\max})^{2}d_k[n] c_k[n]$ represents the energy consumption of ISD $k$ for local computing, where $\varpi_k$ is the effective switched capacitance coefficient of ISD $k$~\cite{PanPWZW21}. Moreover, $E_{k,u}^{\text{tran}}=P_kd_k[n]/R_{k,u}[n]$ means energy consumption of ISD $k$ for task uploading.
 
\subsubsection{Energy consumption of a UAV}

\par The energy consumption for UAV $u$ can be calculated as
\begin{sequation}
\begin{aligned}
\label{eq_E_u}
    E_u[n] &= \sum_{k\in\mathcal{K}}\big(\underbrace{o_k^u[n] E_u^{\text{com}}[n]}_{\text{Computation energy}} + \sum_{\substack{v\in\mathcal{U},v\neq u}}\underbrace{o_k^v[n]E_{u,v}^{\text{tra}}[n]}_{\text{Transmission energy to UAV}}\\&+\underbrace{o_k^M[n]E_{u,M}^{\text{tra}}[n]}_{\text{Transmission energy to MBS}}\big)+\underbrace{E_u^{\text{pro}}\tau}_{\text{Flying energy}}.
\end{aligned}
\end{sequation}

\noindent First, $E_u^{\text{com}}[n]=\varrho_ud_k[n]c_k[n]$ is the energy consumption of UAV $u$ to process task $\boldsymbol{T}_k[n]$, where $\varrho_u$ denotes the energy consumption per unit CPU cycle of UAV $u$. Moreover, $E_{u,v}^{\text{tra}}[n]=d_k[n]/R_{u,v}[n]$ and $E_{u,M}^{\text{tra}}[n]=d_k[n]/R_{u,M}[n]$ denote the energy consumption of UAV $u$ when transmitting the task to UAV $v$ and MBS $M$, respectively, for further processing.  Besides, $E_u^{\text{fly}}[n]=E_u^{\text{pro}}\tau$ means the unit propulsion energy of ISD $k$ which can be given as~\cite{zeng2019accessing} $E_u^{\text{pro}}=\vartheta_1(1+3||\mathbf{v}_u[n]||^2/({v_u^{\text{tip}}})^2)+\vartheta_2 \sqrt{\sqrt{ \vartheta_3+||\mathbf{v}_u[n]||^4/4}-||\mathbf{v}_u[n]||^2/2}+ \vartheta_4||\mathbf{v}_u[n]||^3$, where $\mathbf{v}_u[n]=||\boldsymbol{q}_u[n+1]-\boldsymbol{q}_u[n]||/\tau$ is the instantaneous velocity of UAV $u$, $v_u^{\text{tip}}$ denote the tip speed of the rotor blade, and $ \vartheta_1$, $\vartheta_2$, $ \vartheta_3$, and $ \vartheta_4$ are the constants that depend on the aerodynamic properties.

\subsubsection{Energy consumption of MBS}
\par The energy consumption for MBS $M$ can be given as
\begin{sequation}
    \label{eq_MBS_energy}
    E_M[n] = \sum_{k\in\mathcal{K}}o_k^M[n]\varrho_Md_k[n]c_k[n],
\end{sequation}

\noindent where $\varrho_M$ represents the energy consumption per unit CPU cycle of MBS $M$.

\section{Problem Formulation and Analysis}
\label{sec:formulation}

\par This section shows the problem formulation and analysis.

%
%
\subsection{Problem Formulation}
\label{sec:Problem Formulation}

\par We aim to minimize the service delay in each time slot through jointly optimizing the computation offloading $\mathbf{O} = \{o_k^k[n],o_k^u[n],o_k^v[n],o_k^M[n]\}_{\forall k, u, n}$, service caching $\mathbf{Z} = \{z_k^u[n]\}_{\forall k, u, n}$, communication resource allocation $\mathbf{\Theta} = \{\theta_{k,u}[n],\theta_{u,v}[n],\theta_{u,M}[n]\}_{\forall k, u,, n}$, computation resource allocation $\mathbf{F} = \{f_k^u[n],f_k^v[n], f_k^M[n]\}_{\forall k, u, n}$, and UAV trajectory control $\mathbf{Q} = \{\boldsymbol{q}_u[n]\}_{\forall u,n}$. Our problem is formulated as a single time-slot optimization to obtain the optimal decisions for each time slot, allowing the approach to adapt to the dynamic characteristics of the ICPS such as the resource availability, workload, and UAV trajectory. Thus, the SDMOP can be mathematically formulated as follows:

\vspace{-0.8em}
{\small
\begin{align}
    \mathbf{P}: \ &{\min_{\mathbf{O,Z,Q,\Theta,F}} \ \sum\limits_{k\in\mathcal{K}}t_k[n]}, \label{AA} \\
    \text{s.t.}\ \ 
    &\sum\limits_{k \in \mathcal{K}_u} \theta_{k,u}[n](1-o_k^k[n]) \leq 1, \forall u, n, \tag{\ref{AA}{a}} \label{AAa}\\
    &\sum\limits_{u,v\in\mathcal{U},u \neq v} \theta_{u,v}[n]o_k^v[n]\leq 1, \forall n, \tag{\ref{AA}{b}} \label{AAb}\\
    &\sum\limits_{u\in\mathcal{U}} \theta_{u,M}[n]o_k^M[n]\leq 1, \forall k, n, \tag{\ref{AA}{c}} \label{AAc}\\
    &\sum\limits_{k \in \mathcal{K}} o_k^u[n]   f_k^u[n] \leq F_u^{max}, \forall u,n ,\tag{\ref{AA}{d}} \label{AAd}\\
    &\sum\limits_{k \in \mathcal{K}}o_k^M[n] f_k^M[n] \leq F_M^{max}, \forall n,\tag{\ref{AA}{e}} \label{AAe}\\
    &\sum\limits_{k \in \mathcal{K}} z_k^u[n] \leq C_u^{\max}, \forall u,n,
    \tag{\ref{AA}{f}} \label{AAf}\\ &o_k^k[n]+o_k^u[n]+\sum\limits_{\substack{v\in\mathcal{U},v \neq u}}o_k^v[n]+o_k^M[n]=1,  \forall k,u,n,
    \tag{\ref{AA}{g}} \label{AAg}\\
    & E_k[n]\leq E_k^{\max}, \forall k,n, \tag{\ref{AA}{h}}\label{AAf1}\\
    & E_u[n]\leq E_u^{\max}, \forall u,n, \tag{\ref{AA}{i}}\label{AAf2}\\
    & E_M[n]\leq E_M^{\max}, \forall n, \tag{\ref{AA}{j}}\label{AAEM}\\
    &t_k[n]<t_k^{\max}, \forall k, \forall n,\tag{\ref{AA}{k}} \label{AAl}\\
    &\eqref{AAi}-\eqref{AAk}, \eqref{eq_con_caching}, \notag 
\end{align}}

\noindent where (\ref{AAa}), (\ref{AAb}), and (\ref{AAc}) represent the constraints of bandwidth allocation. Moreover, (\ref{AAd}) and (\ref{AAe}) constrain the computation resources of UAVs and the MBS. Then, (\ref{AAf}) constrains the caching resources for UAVs, and (\ref{AAg}) guarantees the binary decision of computation offloading. Furthermore, constraints (\ref{AAf1}), (\ref{AAf2}), and (\ref{AAEM}) impose limits on the energy consumption of ISD $k$, UAV $u$, and MBS $M$ respectively. In addition, \eqref{AAl} ensures that each task can be completed within the deadline. Besides, \eqref{AAi} to \eqref{AAk} constrain the UAV trajectories, and \eqref{eq_con_caching} constrain the relationship between computation offloading and service caching.

 
\subsection{Problem Analysis}

\subsubsection{Challenges} 

\par Solving problem $\mathbf{P}$ directly may presents several challenges as follows: \textit{i) NP-hard and non-convex.} Problem $\mathbf{P}$ involves both binary decision variables (i.e., computation offloading $\mathbf{O}$ and service caching $\mathbf{Z}$) and continuous decision variables (i.e., resource allocation $\{\mathbf{\Theta},\mathbf{F}\}$ and UAV trajectory control $\mathbf{Q}$), which is an NP-hard and non-convex MINLP~\cite{Boyd2014}. \textit{ii) Interdependent decision variables.} The decision variables of problem $\mathbf{P}$ are coupled and interdependent with each other. On the one hand, these decision variables collectively affect the service delay. On the other hand, these decision variables are mutually dependent. For example, the decision of service caching can influence the resource allocation, which in turn impact the UAV trajectory control. \textit{iii) Extensive decision spaces.} Problem $\mathbf{P}$ not only involves diverse decisions ($\mathbf{O,Z,Q,\Theta,F}$), but also consists of multiple types of decisions for computation offloading ($o^k_k[n], o^k_u[n],o^k_v[n],o^k_M[n]$), communication resource allocation $\{\theta_{k,u}[n],\theta_{u,v}[n],\theta_{u,M}[n]\}$, and computation resource allocation $\{f_k^u[n],f_k^v[n], f_k^M[n]\}$, which leads to large decision spaces, especially in dense scenarios.



\subsubsection{Motivation}

\par The aforementioned challenges motivate the design of a low-complexity and high-efficiency approach. To effectively address these challenges, we propose the J$\text{C}^5$A, guided by the following key motivations:

\begin{enumerate}[i)]
\item \textit{Challenges of NP-hard and non-convex MINLP in $\mathbf{P}$.} The NP-hard and non-convex MINLP feature of problem $\mathbf{P}$ makes it challenging to directly find an optimal solution. Although the machine learning algorithms such as deep reinforcement learning (DRL) are effective for decision making, the mixed integral, coupled, and extensive decision spaces of problem $\mathbf{P}$ could lead to complex action spaces. Therefore, DRL would require extensive environmental interactions and experience long convergence time to acquire optimal policies, making it costly in the resource-constrained aerial MEC-assisted ICPS.

\item \textit{Decoupling interdependent decision variables.} The interdependent decision variables motivate us to decouple them by decomposing problem $\mathbf{P}$ into manageable subproblems~\cite{Chu2024Joint}. By solving each subproblem individually, we can simplify the decision making process while satisfying the task offloading requirements of ISDs and meeting the resource constraints of the MEC servers.

\item \textit{Balancing complexity reduction and performance in extensive decision spaces.} When decoupling the problem, the extensive decision space poses a challenge for achieving the tradeoff between the complexity reduction and performance degradation. Specifically, while dividing the problem into multiple subproblems can lower computational complexity, it may lead to performance degradation due to the narrower solution space. Conversely, solving fewer subproblems may yield higher solution quality but complicate the problem solving.  Therefore, the problem division of the proposed approach is driven by the need to reduce the problem complexity while maintaining the satisfied performance.  


\end{enumerate}

\section{The proposed J$\text{C}^5$A}
\label{sec:Algorithm}

\par  We propose J$\text{C}^5$A to solve the formulated SDMOP in this section. Specifically, J$\text{C}^5$A first decouples problem $\mathbf{P}$ into three subproblems, i.e., computation offloading and service caching, communication and computation resource allocation, and UAV trajectory control. Then, by iteratively solving the subproblems, we finally obtain a high-performance suboptimal solution. Fig.~\ref{fig_algrithm-framework} shows the J$\text{C}^5$A framework.

\begin{figure*}[h]
    \centering
    \setlength{\abovecaptionskip}{0pt}%
    \setlength{\belowcaptionskip}{0pt}%
    \includegraphics[width =6.8in]{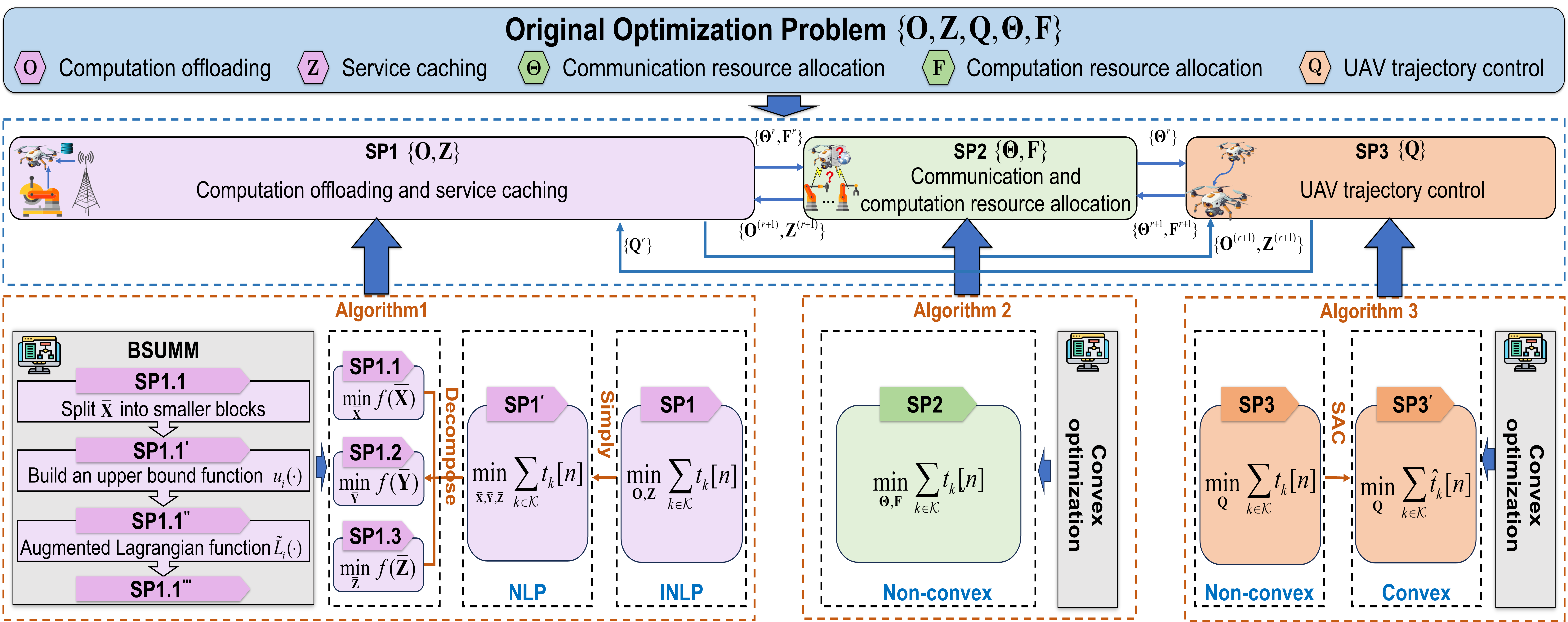}
    \caption{The framework of J$\text{C}^5$A.} 
\label{fig_algrithm-framework}
\vspace{-1.3em}
\end{figure*}

\subsection{Computation Offloading and Service Caching}

\par Given communication resource allocation $\hat{\mathbf{\Theta}}$, computation resource allocation $\hat{\mathbf{F}}$, and UAV trajectory control $\hat{\mathbf{Q}}$, problem $\mathbf{P}$ can be transformed into subproblem $\mathbf{SP1}$ to determine the computation offloading $\mathbf{O}$ and service caching $\mathbf{Z}$, which is reformulated as follows:

 \vspace{-0.8em}
{\small
\begin{align}
\label{SP1}
    \mathbf{SP1}: \ &\min_\mathbf{O,Z} \ \sum\limits_{k\in\mathcal{K}}t_k[n],\\
    \text{s.t.} \ &(\ref{AAa})-(\ref{AAl}), \eqref{eq_con_caching}. \notag
\end{align}}

\begin{theorem}
\label{th_SP1}
    Subproblem $\mathbf{SP1}$ is an integer non-linear programming problem (INLP).
\end{theorem}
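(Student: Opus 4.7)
The plan is to verify both defining features of an integer non-linear program for $\mathbf{SP1}$: (i) all surviving decision variables are integer-valued, and (ii) at least one component of the program (objective or constraints) is non-linear in those variables. First I would note that with $\hat{\mathbf{\Theta}}$, $\hat{\mathbf{F}}$, and $\hat{\mathbf{Q}}$ held fixed, the only free variables in $\mathbf{SP1}$ are $\mathbf{O}$ and $\mathbf{Z}$, and each component $o_k^k[n]$, $o_k^u[n]$, $o_k^v[n]$, $o_k^M[n]$, $z_k^u[n]$ was defined in Section~\ref{sec:Models and Preliminaries} to take values in $\{0,1\}$. Hence the integrality condition is immediate from the variable definitions and from the binary constraint embedded in \eqref{AAg}.

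For non-linearity, I would point to the caching-offloading coupling constraint \eqref{eq_con_caching}, i.e., $o_k^u[n](1-z_k^u[n])=0$. Expanding yields $o_k^u[n]-o_k^u[n]z_k^u[n]=0$, which is bilinear in the binary decision variables and cannot be recast as a linear constraint in $(\mathbf{O},\mathbf{Z})$; this alone suffices to violate linearity of the feasible region. As a reinforcing observation I would note that the caching-induced backhaul term $\max\{z_k^u[n]-z_k^u[n-1],0\}\,D_s/\bar{R}$ appearing inside $t_k[n]$ is a piecewise-linear, hence non-linear, function of $\mathbf{Z}$, and that several carried-over constraints such as \eqref{AAa}, \eqref{AAb}, \eqref{AAc}, \eqref{AAd}, \eqref{AAe} multiply offloading indicators by one another or by fixed allocation coefficients, which again departs from pure linearity in the joint variables. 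Combining integrality with any one of these non-linear elements yields the INLP classification directly from the textbook definition.

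The proof is essentially a classification argument rather than a deep derivation, so the main obstacle is not mathematical difficulty but rather careful bookkeeping: I would need to enumerate every decision variable that remains after the block decomposition and verify that no component of the carried-over constraints \eqref{AAa}--\eqref{AAl} secretly reintroduces a continuous variable that was supposed to be fixed, and also confirm that the non-linearity witnessed in \eqref{eq_con_caching} cannot be linearised away by a standard big-$M$ or McCormick reformulation within the native formulation of $\mathbf{SP1}$. Once these checks pass, the theorem follows as a direct appeal to the definition of an integer non-linear program.
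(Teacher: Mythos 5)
Your proposal is correct and follows essentially the same classification argument as the paper: confirm that the remaining free variables $(\mathbf{O},\mathbf{Z})$ are all binary, then exhibit a non-linear element. The paper's proof is terser and uses a single witness for non-linearity, namely the term $\max\{z_k^u[n]-z_k^u[n-1],0\}D_{s}/\bar{R}$ in the objective, which you also cite as a reinforcing observation. One caution about your \emph{primary} witness: the bilinear constraint \eqref{eq_con_caching}, $o_k^u[n](1-z_k^u[n])=0$, is exactly linearizable for binary variables as $o_k^u[n]\leq z_k^u[n]$ --- indeed the paper performs precisely this reformulation in \eqref{y_z} when simplifying $\mathbf{SP1}$ --- so your claim that it ``cannot be recast as a linear constraint'' is false, and that witness alone would not survive the linearization check you yourself flag at the end of your proposal. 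The argument still goes through because the $\max\{\cdot,0\}$ term in the objective remains as a non-linear component of the formulation as stated, so leading with that term (as the paper does) makes the proof cleaner.
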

\begin{proof}
    \label{proof_sp1_NP}
     First, the subproblem $\mathbf{SP1}$ only involves binary variables. Furthermore, the objective function \eqref{SP1} contains a nonlinear term, i.e., $\max\{z_k^u[n] - z_k^u[n-1],0\} D_{s_k}/\bar{R}$. Therefore, subproblem $\mathbf{SP1}$ is an INLP.
\end{proof}

\par Theorem \ref{proof_sp1_NP} indicates that solving $\mathbf{SP1}$ directly is challenging. Specifically, subproblem $\mathbf{SP1}$ has two significant characteristics. On the one hand, $\mathbf{SP1}$ involves two types of decisions, i.e., the task offloading decision made by ISDs ($\mathbf{O}$) and the service caching decision made by UAVs ($\mathbf{Z}$)). On the other hand, the task offloading decision include both the local computing determined by ISD resources ($o^k_k[n]$) and the edge offloading decision determined by MEC server resources ($o^k_u[n],o^k_v[n],o^k_M[n]$). Based on these characteristics, the decisions of $\mathbf{SP1}$ can be categorized separately for distributed and parallel decision-making to improve the solving efficiency. Therefore, we solve the $\mathbf{SP1}$ by the following steps. First, we simplify \eqref{SP1} by dividing the computation offloading decision into the decisions of local computing and edge offloading. Then, considering that the distributed BSUMM method can solve the problem in parallel, offering advantages in solution speed and decomposability~\cite{han2017signal}, we employ the BSUMM to solve the simplified $\mathbf{SP1}$.

\subsubsection{Simplification for Subproblem $\mathbf{SP1}$} 

\par To achieve the distributed parallel computing, the variable of computation offloading $\mathbf{O}$ is divided into the parts of local computing and edge offloading, which are as follows:

\vspace{-0.8em}
{\small
\begin{subequations}
    \begin{alignat}{1}
        &\mathbf{X}=\big\{x_k[n]\bigg|x_k[n]=1-o_k^k[n],\forall k, n\big\},\label{X_transform}\\
        &\begin{aligned}
            \mathbf{Y}=\big\{&y_k^u[n],y_k^v[n],y_k^M[n]\big|y_k^u[n]=o_k^u[n],\\
            &y_k^v[n]=o_k^v[n],y_k^M[n]=o_k^M[n] ,\forall k,u\neq v,n \big\},
        \end{aligned}
    \end{alignat}
\end{subequations}}
\noindent where \eqref{X_transform} eliminates the constant in constraint \eqref{AAg}.

\par {\textbf{First}}, (\ref{AAg}) is transformed into as follows:

\vspace{-0.8em}
{\small
\begin{subequations}
\label{x_y_1_2}
    \begin{alignat}{1}
        &x_k[n] \geq y_k^u[n] + \sum\limits_{\substack{v\in\mathcal{U},v \neq u}}y_k^v[n] + y_k^M[n],\forall k, u, n, \label{x_y_1}\\
        &x_k[n] \leq \max\{y_k^u[n],y_k^v[n],y_k^M[n]\}, \forall k, u, n. \label{x_y_2}
    \end{alignat}
\end{subequations}}

\par {\textbf{Second}}, it can be deduced that if $y_k^u[n] = 1$, then $z_k^u[n]=1$, and if $y_k^u[n] = 0$, then $z_k^u[n] \in \{0,1\}$. Therefore, constraint (\ref{eq_con_caching}) can be rewritten as follows:
\begin{sequation}
\label{y_z}
    y_k^u[n] \leq z_k^u[n], \forall k, u, n. 
\end{sequation}

\par {\textbf{Finally}}, we calculate the sum of constraints with respect to the computation offloading $\mathbf{O}=\{\mathbf{X},\mathbf{Y}\}$ and service caching $\mathbf{Z}$. Specifically, we relax these variables into continuous variables, which are as follows:

 \vspace{-0.8em}
{\small
\begin{subequations}
    \label{eq_binary_to_continuous}
    \begin{alignat}{1}
        &\overline{\mathbf{X}}\triangleq\big\{\sum_{k\in\mathcal{K}} x_k[n]=1,x_k[n]\in[0,1]\big\},\label{overline_X}\\
        &\begin{aligned}                              
        \overline{\mathbf{Y}}\triangleq\big\{\sum_{u\in\mathcal{U}}\sum_{k\in\mathcal{K}_u}y_k^u[n]+\sum\limits_{\substack{v\in\mathcal{U}, v\neq u}} y_k^v[n]+y_k^M[n]=1,\\
            y_k^u[n],y_k^v[n],y_k^M[n]\in[0,1]\big\} ,
        \end{aligned}\label{overline_Y}\\
        &\overline{\mathbf{Z}}\triangleq \big\{ \sum_{u\in\mathcal{U}}\sum_{k\in\mathcal{K}} z_k^u[n] = 1, z_k^u[n] \in [0,1] \big\}.\label{overline_Z}
    \end{alignat}
\end{subequations}}

\par Based on the above analysis, the subproblem $\mathbf{SP1}$ can be transformed as follows:

\vspace{-0.8em}
{\small
\begin{align}
\mathbf{SP1^{\prime}}:\ &\min_\mathbf{\overline{X},\overline{Y},\overline{Z}} \ {\sum\limits_{k\in\mathcal{K}}t_k[n]}, \label{SP1_pie}\\
    \text{s.t.} \ \ &\sum\limits_{k \in \mathcal{K}_u} \theta_{k,u}[n]x_k[n]\leq 1, \forall u, n \tag{\ref{SP1_pie}{a}} \label{SP1_pie_a}\\
    &\sum\limits_{u,v\in\mathcal{U}, u \neq v,} \theta_{u,v}[n]y_k^v[n]\leq 1, \forall k, n, \tag{\ref{SP1_pie}{b}} \label{SP1_pie_b}\\
    &\sum\limits_{u\in\mathcal{U}} \theta_{u,M}[n]y_k^M[n]\leq 1, \forall k, n, \tag{\ref{SP1_pie}{c}} \label{SP1_pie_c}\\
    &\sum\limits_{k \in \mathcal{K}} y_k^u[n]   f_k^u[n] \leq F_u^{max}, \forall u, n ,\tag{\ref{SP1_pie}{d}} \label{SP1_pie_d}\\
    &\sum\limits_{k \in \mathcal{K}}y_k^M[n] f_k^M[n] \leq F_M^{max}, \forall n ,\tag{\ref{SP1_pie}{e}} \label{SP1_pie_e}\\
    &\eqref{AAf},\ \eqref{AAl},  \ \text{and} \ \eqref{x_y_1}-\eqref{overline_Z} \notag.
\end{align}}

\par Solving subproblem $\mathbf{SP1^{\prime}}$ directly remains challenging since it is a  non-linear problem (NLP), as given in Theorem \ref{th_SP1_pie}. Therefore, we will present the solution of $\mathbf{SP1^{\prime}}$ in the next subsection.

\begin{theorem}
\label{th_SP1_pie}
    Subproblem $\mathbf{SP1^{\prime}}$  is a convex NLP.
\end{theorem}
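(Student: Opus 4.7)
The plan is to verify two separate properties of $\mathbf{SP1^{\prime}}$: first, that it is genuinely nonlinear (to justify the ``NLP'' label rather than LP), and second, that both its objective and its feasible set are convex. Throughout, the quantities $\hat{\mathbf{\Theta}}, \hat{\mathbf{F}}, \hat{\mathbf{Q}}$ are treated as fixed, and the iterates $z_k^u[n-1]$ from the previous slot are treated as known constants.

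For the nonlinearity, I would point to the caching contribution $\max\{z_k^u[n]-z_k^u[n-1],0\}\, D_s/\bar{R}$ inside $t_k[n]$. Since $z_k^u[n-1]$ is a constant, this is a piecewise-linear, non-affine function of $z_k^u[n]$, so the objective \eqref{SP1_pie} is not affine and $\mathbf{SP1^{\prime}}$ cannot be classified as an LP.

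For the convexity of the objective, the per-mode delays $t_k^{\mathrm{loc}}[n], t_k^u[n], t_k^v[n], t_k^M[n]$ reduce to positive constants once $\hat{\mathbf{\Theta}}, \hat{\mathbf{F}}, \hat{\mathbf{Q}}$ are fixed. Under the substitutions $o_k^k[n]=1-x_k[n]$, $o_k^u[n]=y_k^u[n]$, $o_k^v[n]=y_k^v[n]$, $o_k^M[n]=y_k^M[n]$ introduced in~\eqref{X_transform}, the four offloading terms in $t_k[n]$ become affine in the relaxed variables, and the caching term is the pointwise maximum of two affine functions in $z_k^u[n]$ and hence convex. A finite sum of affine and convex functions is convex, so the objective of $\mathbf{SP1^{\prime}}$ is convex.

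For the convexity of the feasible set, I would check each constraint in turn. Constraints \eqref{SP1_pie_a}--\eqref{SP1_pie_e}, \eqref{AAf}, \eqref{x_y_1}, \eqref{y_z}, and the relaxations \eqref{overline_X}--\eqref{overline_Z} are all affine in the relaxed variables (given that $\hat{\mathbf{\Theta}}$ and $\hat{\mathbf{F}}$ are fixed) and therefore describe convex polyhedral sets. The deadline constraint \eqref{AAl}, $t_k[n]\le t_k^{\max}$, is the sublevel set of the convex function $t_k[n]$ just shown convex in the previous step, and is therefore convex as well.

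The main obstacle I anticipate is constraint~\eqref{x_y_2}, $x_k[n]\le \max\{y_k^u[n],y_k^v[n],y_k^M[n]\}$: the hypograph of a $\max$ of affine functions is generically non-convex, so this does not fit the template above. My plan is to argue that \eqref{x_y_2} is effectively non-binding under the remaining structure. Specifically, \eqref{x_y_1} already enforces $x_k[n]\ge y_k^u[n]+\sum_{v\ne u} y_k^v[n]+y_k^M[n]\ge \max\{y_k^u[n],y_k^v[n],y_k^M[n]\}$, while the simplex equalities \eqref{overline_X} and \eqref{overline_Y} together with $x_k[n],y_k^{\cdot}[n]\in[0,1]$ force $x_k[n]$ and the corresponding sum of $y$'s to be consistent; in the relaxed regime, \eqref{x_y_2} either collapses to the linear equality $x_k[n]=y_k^u[n]+\sum_{v\ne u}y_k^v[n]+y_k^M[n]$ (which preserves the original \eqref{AAg} and is affine) or it is a redundant consequence of the remaining constraints. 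Confirming this reduction, so that the effective feasible set is the intersection of convex sets, is the key technical step I expect to carry out; once it is established, combining convex objective with convex feasible set yields the claim that $\mathbf{SP1^{\prime}}$ is a convex NLP.
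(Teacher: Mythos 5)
Your treatment of the objective and of the affine constraints coincides with the paper's: the paper likewise observes that $\mathbf{SP1^{\prime}}$ is linear in $(\overline{\mathbf{X}},\overline{\mathbf{Y}},\overline{\mathbf{Z}})$ except for the two $\max$ terms, and then verifies convexity of $\omega(a,b)=\max\{a,b\}$ via the standard inequality $\omega(\delta a_1+(1-\delta)a_2,\delta b_1+(1-\delta)b_2)\le\delta\,\omega(a_1,b_1)+(1-\delta)\,\omega(a_2,b_2)$. Where you genuinely go further is in flagging constraint \eqref{x_y_2} as the real obstruction: convexity of $\max$ justifies the caching term in the objective and any constraint that \emph{upper}-bounds a $\max$, but a constraint of the form ``affine $\le$ convex'' does not define a convex set. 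The paper's proof does not acknowledge this at all; it concludes convexity of the whole problem directly from convexity of $\omega$. You have correctly located the crux.

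However, the resolution you announce cannot be carried out, so the gap remains open in your proposal as well. Since every $y$-variable lies in $[0,1]$ and is nonnegative, \eqref{x_y_1} gives $x_k[n]\ge y_k^u[n]+\sum_{v\ne u}y_k^v[n]+y_k^M[n]\ge\max\{y_k^u[n],y_k^v[n],y_k^M[n]\}$, and combining this with \eqref{x_y_2} forces the sum to equal the max, i.e., at most one of the $y$-variables associated with ISD $k$ can be strictly positive. That set is a union of line segments --- for two servers, $(1,0)$ and $(0,1)$ are feasible while $(0.5,0.5)$ is not --- hence non-convex; the global simplex equalities \eqref{overline_X} and \eqref{overline_Y} do not repair this because they only constrain sums over all $k$ and $u$. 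So \eqref{x_y_2} is binding, not redundant, and it does not collapse to the affine equality $x_k[n]=y_k^u[n]+\sum_{v\ne u}y_k^v[n]+y_k^M[n]$. The theorem becomes true only if \eqref{x_y_2} is dropped or replaced by that affine equality in the relaxation --- which is what the paper implicitly does by simply ignoring the constraint in its convexity argument. In short, your diagnosis is sharper than the paper's own proof, but the ``key technical step'' you defer is exactly the step that fails, and any complete proof must instead modify or discard \eqref{x_y_2}.
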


\begin{proof}
    Problem $\mathbf{SP1^{\prime}}$ only involves continuous variables (i.e., $\mathbf{\overline{X}}$, $\mathbf{\overline{Y}}$, and $\mathbf{\overline{Z}}$), and it includes linear relationships with respect to these variables except for two terms, i.e., $\max\{y_k^u[n],y_k^v[n],y_k^M[n]\}$ in \eqref{x_y_2} and $\max\{z_k^u[n] - z_k^u[n-1],0\}$ in \eqref{SP1_pie}. Therefore, $\mathbf{SP1^{\prime}}$ is an NLP~\cite{Chu2024Joint}. Moreover, it can be proved that $\omega(a,b)=\max\{a,b\}$ is convex with respect to $a$ and $b$~\cite{boyd2004convex}. Specifically, $\forall (a_1,b_1), \ (a_2,b_2) \in \mathbb{R}^2$ and $\delta \in [0,1]$, we can deduce that
    \begin{sequation}
    \begin{aligned}
        &\omega\left(\delta a_1+(1-\delta)a_2,\delta b_1+(1-\delta)b_2\right)\\
        = \ &\max\{\delta a_1+(1-\delta)a_2,\delta b_1+(1-\delta)b_2\}\\
        \leq \  &\delta \max\{a_1,b_1\}+(1-\delta)\max\{a_2,b_2\}\\
        = \ &\delta \omega(a_1,b_1)+(1-\delta)\omega(a_2,b_2),
    \end{aligned}
    \end{sequation}
    
    \noindent where $\omega(a,b)$ is convex. Therefore, subproblem $\mathbf{SP1^{\prime}}$ is a convex NLP.
\end{proof}

\subsubsection{Solution for Problem $\mathbf{SP1^{\prime}}$} 

\par The BSUMM is employed to solve subproblem $\mathbf{SP1^{\prime}}$ in a distributed manner. Specifically, we iteratively optimize $\overline{\mathbf{X}}$, $\overline{\mathbf{Y}}$, and $\overline{\mathbf{Z}}$ by splitting $\mathbf{SP1^{\prime}}$. For the convenience of description, $f(\cdot)$ is used to denote the objective function of each subproblem as follows: 

\vspace{-0.8em}
{\small
\begin{alignat}{1}
    \mathbf{SP1.1}:\quad&\min_{\mathbf{\overline{X}}} \ f(\mathbf{\overline{X}}), \label{SP1.1}\\
    \text{s.t.} \ \ &\eqref{AAl},\eqref{x_y_1}, \eqref{x_y_2}, \eqref{overline_X}, \ \text{and } \ \eqref{SP1_pie_a}  \notag,\\
     \mathbf{SP1.2}:\quad&\min_{\mathbf{\overline{Y}}} \ f(\mathbf{\overline{Y}}), \label{SP1.2}\\
    \text{s.t.} \ \ &\eqref{AAl},\eqref{x_y_1}-\eqref{y_z}, \eqref{overline_Y}, \ \text{and} \ \eqref{SP1_pie_b}-\eqref{SP1_pie_e}  \notag,\\
     \mathbf{SP1.3}:\quad&\min_{\mathbf{\overline{Z}}} \ f(\mathbf{\overline{Z}}), \label{SP1.3}\\
    \text{s.t.} \ \ &\eqref{AAl},\eqref{AAf},\eqref{y_z}, \ \text{and} \ \eqref{overline_Z}.   \notag
\end{alignat}}

\par We present the solution of $\mathbf{SP1.1}$ by using the BSUMM method as an example for both $\mathbf{SP1.2}$ and $\mathbf{SP1.3}$. The main steps are illustrated as follows.

\par \textbf{First}, variable $\overline{\mathbf{X}}$ is split into $m$ smaller blocks based on the relationships between variables. For example, the offloading decisions of ISDs served by the same UAV can be considered to be a variable block. Therefore, subproblem $\mathbf{SP1.1}$ can be expressed as follows:

\vspace{-0.8em}
{\small
\begin{alignat}{1}
\label{SP1.1_pie}
    \mathbf{SP1.1^{\prime}}: \quad &\min\limits_{\mathbf{\overline{X}}} f(\boldsymbol{x}_1,\boldsymbol{x}_2,\ldots,\boldsymbol{x}_m),  \\
        \text{s.t.}\ \ &\boldsymbol{x}_i \in \mathcal{X}_i, \ i = 1,2,\ldots,m,\notag\\
        &\eqref{AAl},\eqref{x_y_1}, \eqref{x_y_2}, \eqref{overline_X} , \ \text{and} \ \eqref{SP1_pie_a}  \notag,
\end{alignat}}

\noindent where $\overline{\mathbf{X}}=(\boldsymbol{x}_1,\ldots,\boldsymbol{x}_m)$,  $\mathcal{X}=\mathcal{X}_1 \times \cdots \times \mathcal{X}_m \subseteq \mathbb{R}^l$, each $\mathcal{X}_i \subseteq \mathbb{R}^{l_i}$  is a closed convex set, and $l_i$ represents the dimension of the $i$-th block vector $\boldsymbol{x}_i$.

\par \textbf{Second}, at each iteration, one variable block $\boldsymbol{x}_i$ is selected to optimize based on the cyclic rule \cite{Razaviyayn2013}, i.e., $i$ is in the order of $\{1,2,\ldots,m,1,2,\ldots,m,\ldots\}$. Moreover, when optimizing variable $\boldsymbol{x}_i$ at the $r$-th iteration, an upper bound function $u_i(\cdot)$ is constructed to transform the problem $\mathbf{SP1.1}$ into a more tractable and faster-converge problem by adding a quadratic penalization as follows:
\begin{sequation}
\label{upper_bound_function}
u_i(\boldsymbol{x}_i,\boldsymbol{x}^{r-1})=f(\boldsymbol{x}_i,\boldsymbol{x}_{-i}^{r-1})+\frac{\sigma_0}{2}||\boldsymbol{x}_i-\boldsymbol{x}_i^{r-1}||^2,
\end{sequation}

\noindent where $u_i(\boldsymbol{x}_i,\boldsymbol{x}^{r-1})$ denotes an approximation function of $f(\boldsymbol{x}_i,\boldsymbol{x}^{r-1}_{-i})$ for each block $i$ at a given feasible point $\boldsymbol{x}^{r-1}\in\mathcal{X}$, $\boldsymbol{x}_{-i}^{r-1}:=(\boldsymbol{x}_1^{r-1},\ldots,\boldsymbol{x}_{i-1}^{r-1},\boldsymbol{x}_{i+1}^{r-1},\ldots,\boldsymbol{x}_n^{r-1})$ represents the blocks except for block $\boldsymbol{x}_i^{r-1}$, and $\frac{\sigma_0}{2}||\boldsymbol{x}_i-\boldsymbol{x}_i^{r-1}||^2 (\sigma_0 > 0)$ is the proximal term.

\par \textbf{Third}, by replacing the objective function (\ref{SP1.1_pie}) with the upper bound function  (\ref{upper_bound_function}), $\mathbf{SP1.1}$ can be reformulated as
\begin{sequation}
    \begin{aligned}
\mathbf{{SP1.1}^{\prime\prime}}: \quad &\min\limits_{\boldsymbol{x}_i} u_i(\boldsymbol{x}_i,\boldsymbol{x}^{r-1}), \\
       \text{s.t.} \ \ &h_j(\boldsymbol{x})=0,j\in\{ 1,2,\ldots,l_{eq} \},\\
       &g_{k}(\boldsymbol{x})\leq 0, k\in\{ 1,2,\ldots,l_{neq} \},
    \end{aligned}
\end{sequation}

\noindent where $h(\cdot)$ represents the linear equality constraint \eqref{overline_X}, and $g(\cdot)$ means the linear inequality constraints (\ref{x_y_1}), (\ref{x_y_2}), \eqref{AAl}, and (\ref{SP1_pie_a}). Moreover, $l_{eq}$ and $l_{neq}$ denote the numbers of linear equality and inequality constraints, respectively. To deal with the linear coupling constraints and better facilitate an application of BSUMM algorithm, we incorporate these constraints into the upper bound function $u_i(\cdot)$ by using the alternating direction method of multipliers algorithm~\cite{Hong2016}, and then obtain the upper bound augmented Lagrangian function $\tilde{L}_i(\cdot)$. Specifically, defining $\boldsymbol{\mu}:=\{\mu_1,\ldots,\mu_{l_{eq}}\}$ and $\boldsymbol{\lambda}:=\{\lambda_1,\ldots,\lambda_{l_{neq}}\}$ as the Lagrange multipliers corresponding to the linear equality and inequality constraints, respectively, $\mathbf{SP1.1''}$ can be rewritten as follows:
\begin{sequation}    
\label{problem_Lag}
    \begin{aligned}        
    &\mathbf{SP1.1^{\prime\prime\prime}}: \quad \min\limits_{\boldsymbol{x}_i}\tilde{L}_i(\boldsymbol{x}_i,\boldsymbol{x}^{r-1};\boldsymbol{\mu}^{r-1},\boldsymbol{\lambda}^{r-1})\\
    =\ &\min\limits_{\boldsymbol{x}_i}u_i(\boldsymbol{x}_i,\boldsymbol{x}^{r-1})+\sum\limits_{j=1}^{l_{eq}}[\mu_j^{r-1} h_j(\boldsymbol{x}_i)+\frac{\sigma}{2}h_j^2(\boldsymbol{x}_i)] \\
    +\ &\frac{1}{2\sigma}\sum\limits_{k=1}^{l_{neq}} \{ [ \max(0,\lambda_k^{r-1}+\sigma g_k(\boldsymbol{x}_i)) ]^2- (\lambda_k^{r-1})^2 \},
    \end{aligned}
\end{sequation}

\noindent where $\sigma$ is a penalty parameter. Moreover, the iterative process for optimizing $\tilde{L}_i$ involves updating the variable block $\boldsymbol{x}_i$ as well as the Lagrange multipliers $\boldsymbol{\mu}$ and $\boldsymbol{\lambda}$, which are given as follows:

\vspace{-0.8em}
{\small
\begin{subequations}
\begin{alignat}{1}
    &\quad\quad\boldsymbol{x}_i^r\in \arg \min\limits_{\boldsymbol{x}_i\in\overline{\mathbf{X}}}    \tilde{L}_i(\boldsymbol{x}_i,\boldsymbol{x}^{r-1};\boldsymbol{\mu}^{r-1},\boldsymbol{\lambda}^{r-1}),  \\
    &\quad\quad\boldsymbol{x}_{-i}^r=\boldsymbol{x}_{-i}^{r-1},   \\
    &\quad\quad\mu_j^{r}=\mu_j^{r-1}+\sigma h_j(\boldsymbol{x}^r), \ j = 1,2,\ldots,l_{eq}, \label{mu} \\
    &\quad\quad\lambda_k^{r}=\lambda_k^{r-1}+\sigma g_k(\boldsymbol{x}^r), \ k = 1,2,\ldots,l_{neq}.  \label{lambda}
\end{alignat}
\end{subequations}}

 \par To ensure that the algorithm can be convergent and does not violate the constraints of the original problem, the conditions need to be guaranteed, which is as follows \cite{Boyd2011}:

\vspace{-0.8em}
{\small
\begin{subequations}
    \begin{alignat}{1}
        &\Omega^r_1 = \left|\tilde{L}_i^{(r)}-\tilde{L}_i^{(r+1)}\right| \leq \epsilon_1, \label{Omega_1}\\
        &\Omega^r_2 = \big\{ \sum\limits_{j=1}^{l_{eq}}h_j^2(\boldsymbol{x}^r)+\sum\limits_{k=1}^{l_{neq}}\big[ \max\big(g_k(\boldsymbol{x}^r),-\lambda_k^r/\sigma\big) \big]^2\big\} ^{0.5} \leq \epsilon_2,\label{Omega_2}
    \end{alignat}
\end{subequations}}

\noindent where $\epsilon_1$ and $\epsilon_2$ are the  acceptable convergence gaps.

\par \textbf{Finally}, when the convergence conditions are met, the optimal solution of $\boldsymbol{x}^r$ can be obtained. However, $\boldsymbol{x}^r$ is a continuous variable within the closed interval of $[0,1]$, while the decision of computation offloading is a binary variable. Therefore, the threshold rounding technique \cite{Elbassioni2021} is applied to transform the relaxed $\boldsymbol{x}^r$ into binary variables. Specifically, each element $x^* \in \boldsymbol{x}^r$ is transformed as follows:
\begin{sequation}
    x^* \ = \ \left\{
    \begin{aligned}
        &1, \ \text{if} \ x^* \geq \delta,\\
        &0, \ \text{otherwise},
    \end{aligned}
    \right.  
\end{sequation}

\noindent where $\delta \in (0,1)$ is a positive rounding threshold. Therefore, the optimal decisions of computation offloading for ISDs can be obtained as $\mathbf{X}^*$. At this point, $\mathbf{SP1.1}$ is completely solved. Similarly, subproblems $\mathbf{SP1.2}$ and $\mathbf{SP1.3}$ can be solved by using the similar method. As a result, the optimal solutions of computation offloading and service caching can be obtained as $\mathbf{X}^*$, $\mathbf{Y}^*$, and $\mathbf{Z}^*$. 

\par However, there is an integrality gap to be concerned because of the rounding process from continuous variables to discrete variables \cite{Zhang2017}. Taking $\mathbf{Z}^*$ as an example, we use $\Delta_1$, $\Delta_2$, and $\Delta_3$ to denote the maximum violation degrees of constraints \eqref{AAf}, \eqref{AAl}, and \eqref{y_z}, respectively. As described in \cite{Feige2016}, we denote the integrality gap  as follows:
\begin{sequation}
    \label{integ_gap}
    \Xi = \min_{\overline{\mathbf{Z}}} \tilde{L}_i/(\tilde{L}_i+\xi \Delta), \forall i,
\end{sequation}

\noindent where $\Delta=\Delta_1+\Delta_2+\Delta_3$, and $\xi$ is the weight of $\Delta$. Moreover, the optimal solutions are accepted when $\Xi=1$, which is proved by Theorem \ref{integrality_gap}.

\begin{theorem}
    \label{integrality_gap}
    There is no violation of constraints when $\Xi=1$.
\end{theorem}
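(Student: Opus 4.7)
The plan is to argue directly from the algebraic form of the integrality gap $\Xi$ defined in \eqref{integ_gap}, translating the equality $\Xi = 1$ into a statement about each component violation degree $\Delta_1, \Delta_2, \Delta_3$. Since $\Xi$ is expressed as the ratio $\tilde{L}_i / (\tilde{L}_i + \xi \Delta)$, the first observation to make is that the numerator and denominator must be equal whenever $\Xi = 1$ holds (at the minimizer over $\overline{\mathbf{Z}}$), which immediately yields $\xi \Delta = 0$.

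First, I would verify that $\tilde{L}_i$ is positive and finite at the candidate solution, since the ratio is only meaningful in that regime; this follows from the construction of the upper-bound augmented Lagrangian in \eqref{problem_Lag}, which adds nonnegative quadratic penalty terms to the objective $f(\cdot)$ that is itself nonnegative (being a sum of nonnegative delay contributions). Second, I would invoke the fact that $\xi > 0$ is a strictly positive weight assigned to the aggregate violation, so $\xi \Delta = 0$ forces $\Delta = 0$. Third, I would recall that $\Delta = \Delta_1 + \Delta_2 + \Delta_3$ is a sum of nonnegative quantities, because each $\Delta_i$ is defined as a maximum violation degree (a nonnegative slack over its respective constraint \eqref{AAf}, \eqref{AAl}, and \eqref{y_z}). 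Hence $\Delta = 0$ can only occur when $\Delta_1 = \Delta_2 = \Delta_3 = 0$ individually.

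Finally, I would interpret the vanishing of each $\Delta_i$ in the language of the original constraints: $\Delta_1 = 0$ says the cache-capacity constraint \eqref{AAf} is satisfied, $\Delta_2 = 0$ says the deadline constraint \eqref{AAl} is respected, and $\Delta_3 = 0$ says the offload-caching coupling constraint \eqref{y_z} holds. The conjunction of these three gives that every constraint incorporated into the penalty term is satisfied, which is precisely the claim that there is no constraint violation. I expect no major obstacle; the only mild subtlety is ensuring that the positivity conditions on $\tilde{L}_i$ and $\xi$ are stated explicitly so that the implication from $\Xi = 1$ to $\Delta = 0$ is rigorous rather than merely formal, and that the nonnegativity of each $\Delta_i$ is justified from its definition as a maximum of the pointwise violation and zero.
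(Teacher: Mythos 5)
Your proposal is correct and follows essentially the same route as the paper's proof: both arguments reduce $\Xi=1$ to $\Delta=0$ and then use the nonnegativity of each $\Delta_i$ (defined as a max with zero) to conclude that constraints \eqref{AAf}, \eqref{AAl}, and \eqref{y_z} are all satisfied. Your version is slightly more explicit about why $\Xi=1$ forces $\xi\Delta=0$ (positivity of $\tilde{L}_i$ and $\xi$), a step the paper passes over by simply asserting that optimal rounding gives $\Delta=0$, but this is a refinement of the same argument rather than a different one.
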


\begin{proof}
    By implementing the maximum violation, the constraints \eqref{AAf}, \eqref{AAl}, and \eqref{y_z} can be re-expressed as follows:

    \vspace{-0.8em}
    {\small
    \begin{subequations}
        \begin{alignat}{1}
        \sum\limits_{k\in\mathcal{K}}z_k^u[n]&\leq C_u^{\max}+\Delta_1,\\
        t_k[n]&<t_k^{\max}+\Delta_2,\\
        y_k^u[n]&\leq z_k^u[n]+\Delta_3,
    \end{alignat}
    \end{subequations}}
    
    \noindent where $\Delta_1$, $\Delta_2$, and $\Delta_3$ are obtained as follows:

    \vspace{-0.8em}
    {\small
    \begin{subequations}
        \begin{alignat}{1}
        \Delta_1&=\max\{\sum\limits_{k\in\mathcal{K}}z_k^u[n]-C_u^{\max},0\},\\
        \Delta_2&=\max\{t_k[n]-t_k^{\max},0\},\\
        \Delta_3&=\max\{y_k^u[n]-z_k^u[n],0\}.
    \end{alignat}
    \end{subequations}}
    
    \noindent $\Delta=\Delta_1+\Delta_2+\Delta_3=0$ indicates that no constraint is violated. Moreover, the solution for $\tilde{L}_i$ is derived by relaxing the variables, and the solution for $\tilde{L}_i + \xi \Delta$ is obtained by rounding the relaxed variables.  In addition, the optimal rounding can be achieved when $\Xi = 1$, and $\Delta = 0$ at this point. Therefore, no constraint is violated when $\Xi=1$.
\end{proof}

\par The joint optimization method of commutation offloading and service caching is summarized in Algorithm \ref{al_offloading_caching}. Specifically, a variable block $\boldsymbol{x}_i$ is selected based on the cyclic rule (line 3). Then, the variable block is optimized and updated after obtaining the upper bound augmented Lagrangian function (lines 4 to 5). Moreover, the variables of $\boldsymbol{y}_i$ and $\boldsymbol{z}_i$ are optimized by solving the problems of (\ref{SP1.2}) and (\ref{SP1.3}) (line 6). In addition, the Lagrange multipliers are updated based on Eqs. \eqref{mu} and \eqref{lambda} (line 8). This process is repeated until the convergence condition is satisfied (lines 9 and 10). Finally, the optimal decisions of computation offloading $\mathbf{O}^* = \{ \mathbf{X}^*, \mathbf{Y}^* \}$ and service caching $\mathbf{Z}^*$ can be obtained by using the rounding technique (lines 11 and 12).
\vspace{-1em}
\begin{algorithm}
\label{al_offloading_caching}
    \caption{Joint Commutation Offloading and Service Caching}
    \KwIn {$\hat{\mathbf{\Theta}}$, $\hat{\mathbf{F}}$, $\hat{\mathbf{Q}}$.}
    \textbf{Initialization:} $\sigma_0$, $\bold{\mu}^{(0)}$, $\bold{\lambda}^{(0)}$, $r = 0$, $\epsilon_1$, $\epsilon_2$, $(\overline{\mathbf{X}}^{(0)},\overline{\mathbf{Y}}^{(0)},\overline{\mathbf{Z}}^{(0)})$\;
    \Repeat{\text{satisfy}\ $\eqref{Omega_1}$ \ \text{and} \ $\eqref{Omega_2}$}
    {
        Select one variable block $\boldsymbol{x}_i$ to be updated\;
        Obtain function $\tilde{L}_i(\boldsymbol{x}_i,\boldsymbol{x}^{r}_{-i};\boldsymbol{\mu}^r,\boldsymbol{\lambda}^r)$\;
        Solve problem $\boldsymbol{x}_i^{r+1} \in \min\limits_{\boldsymbol{x}_i\in\overline{\mathbf{X}}} \tilde{L}_i(\boldsymbol{x}_i,\boldsymbol{x}^{r}_{-i};\boldsymbol{\mu}^r,\boldsymbol{\lambda}^r)$\;
        Set $\boldsymbol{x}_{-i}^{r+1} = \boldsymbol{x}_{-i}^{r}$\;
        Solve (\ref{SP1.2}) and (\ref{SP1.3}) to obtain $\boldsymbol{y}_i^{r+1}$ and $\boldsymbol{z}_i^{r+1}$\;
        Update $\boldsymbol{\mu}^{r}$ and $\boldsymbol{\lambda}^{r}$ based on \eqref{mu} and \eqref{lambda}\;
        Update $r = r + 1$\;
    }
    Generate binary solutions $\bold{X}^*$, $\bold{Y}^*$, and $\bold{Z}^*$ for $\boldsymbol{x}^{r+1}$, $\boldsymbol{y}^{r+1}$, and $\boldsymbol{z}^{r+1}$ by using the rounding technique\;
    \KwOut{$\mathbf{O}^*=\{\mathbf{X}^*,\mathbf{Y}^*\}$, $\mathbf{Z}^*$}
\end{algorithm}

\vspace{-2em}
\subsection{Communication and Computation Resource Allocation}
\par Based on the obtained decisions of computation offloading $\mathbf{O}^*$ and service caching $\mathbf{Z}^*$, along with the given UAV trajectory $\hat{\mathbf{Q}}$, the original optimization problem $\mathbf{P}$ can be transformed into subproblem $\mathbf{SP2}$ to determine the communication resource allocation $\mathbf{\Theta}$ and computation resource allocation $\mathbf{F}$, which is formulated as follows:

\vspace{-0.8em}
{\small
\begin{alignat}{1}
    \mathbf{SP2}: \ &\min_\mathbf{\Theta,F} \ \sum\limits_{k\in\mathcal{K}}t_k[n]\\
    &=\min_\mathbf{\Theta,F} \ \sum\limits_{k\in\mathcal{K}} d_k\big( \frac{o_k^u[n]+\sum\limits_{\substack{v\in\mathcal{U},v\neq u}}o_k^v[n]+o_k^M[n]}{\theta_{k,u}[n]B_{f}\log_2(1+P_{k}G_{k,u}[n]/\sigma^2)}\notag\\
    &+\sum\limits_{\substack{v\in\mathcal{U}, v\neq u}}\frac{o_k^v[n]}{\theta_{u,v}[n]B_{c}\log_2(1+P_{u}G_{u,v}[n]/\sigma^2)}\notag\\
    &+\frac{o_k^M[n]}{\theta_{u,M}[n]B_{b}\log_2(1+P_{u}G_{u,M}[n]/\sigma^2)}+c_k(o_k^u[n]/f_k^u[n]\notag\\
    &+\sum\limits_{\substack{v\in\mathcal{U},v\neq u}}o_k^v[n]/f_k^v[n]+o_k^M[n]/f_k^M[n])\big), \label{SP2} \\
    \text{s.t.} \ &\eqref{AAa}-\eqref{AAe}, \ \text{and} \ \eqref{AAl}\notag.
\end{alignat}}


\par The subproblem $\mathbf{SP2}$ is a convex optimization problem that can be efficiently solved by using the tools such as CVX \cite{Grant2014}, and the method of joint communication and computation resource allocation is summarized in Algorithm \ref{algorithm_resource_allocation}. Specifically, in the $r$-th iteration, the optimal decisions of communication resource allocation $\mathbf{\Theta}^{r+1}$ and computation resource allocation $\mathbf{F}^{r+1}$ are obtained by solving the subproblem $\mathbf{SP2}$ (line 3). Then, repeat this process until the difference of the objective value falls below a given threshold $\epsilon$ between two successive iterations (lines 4 and 5).

\vspace{-0.8em}
\begin{algorithm}
    \label{algorithm_resource_allocation}
    \caption{Joint Communication and Computation Resource Allocation}
    \KwIn {$\mathbf{O}^*=\{\mathbf{X}^*,\mathbf{Y}^*\}$, $\mathbf{Z}^*$, $\hat{\mathbf{Q}}$, $\epsilon$}
    \textbf{Initialization:} $\epsilon$, $\mathbf{\Theta}^{(0)}$,$\mathbf{F}^{(0)}$,$r = 0$\;
    \Repeat{$|G^{r+1}-G^{r}|<\epsilon$}
    {
        Solve problem $\mathbf{SP2}$ to obtain the decisions $\mathbf{\Theta}^{r+1}$, $\mathbf{F}^{r+1}$, and the objective value  $G^{r+1}$\;
        Update $r = r + 1$\; 
    }
    \KwOut {$\mathbf{\Theta}^{*}$ and $\mathbf{F}^{*}$}
\end{algorithm}

\vspace{-1.8em}
\subsection{UAV Trajectory Control}

\par Based on the obtained computing offloading $\mathbf{O}^*$, service caching $\mathbf{Z}^*$, communication resource allocation $\mathbf{\Theta}^*$, and computation resource allocation $\mathbf{F}^*$, problem $\mathbf{P}$ can be transformed as the UAV trajectory control subproblem as:

\vspace{-0.8em}
{\small
\begin{alignat}{1}
    \mathbf{SP3}: \quad &\min\limits_{\mathbf{Q}} \ \sum\limits_{k\in\mathcal{K}} t_k[n],\label{SP3}\\
    \text{s.t.} \ \ &\eqref{AAi}-\eqref{AAl}\notag.
\end{alignat}}

\par It can be inferred that problem $\mathbf{SP3}$ is non-convex because the objective function, as well as the left-hand-side of constraints (\ref{AAj}) and (\ref{AAl}) are neither convex nor concave with respect to UAV trajectory $\boldsymbol{q}[n]$. Therefore, we convert the non-convex problem into a convex problem by the following steps.

%
%
\textbf{First}, to deal with the non-convex objective function \eqref{SP3}, we introduce the upper bound of the objective function, which is as follows:

\vspace{-0.8em}
{\small
\begin{alignat}{1}
    &t_k[n] \leq d_k \bigg( \frac{o_k^u[n]+\sum\limits_{\substack{v\in\mathcal{U}, v\neq u}}o_k^v[n]+o_k^M[n]}{\hat{R_{k,u}}[n]}+\sum\limits_{\substack{v\in\mathcal{U}, v\neq u}}\frac{o_k^v[n]}{\hat{R}_{u,v}[n]}+\notag\\
    &\frac{o_k^M[n]}{\hat{R}_{u,M}[n]}\bigg)+c_k\bigg(\frac{o_k^u[n]}{f_k^u[n]}+\sum\limits_{\substack{v\in\mathcal{U}, v\neq u}}\frac{o_k^v[n]}{f_k^v[n]}+\frac{o_k^M[n]}{f_k^M[n]}\bigg)g = \hat{t}_k[n],
\end{alignat}
}

\noindent where $\hat{R}_{k,u}[n]$, $\hat{R}_{u,v}[n]$, and $\hat{R}_{u,M}[n]$ represent the lower bounds of $R_{k,u}[n]$, $R_{u,v}[n]$, and $R_{u,M}[n]$, respectively, which are obtained by Theorem \ref{lower_bound_of_R}. 

\begin{theorem}
\label{lower_bound_of_R}
    Given the local points $\boldsymbol{q}^r_u$ and $\boldsymbol{q}^r_v (v\neq u)$ at the $r$-th iteration, $R_{k,u}[n]$, $R_{u,v}[n]$, and $R_{u,M}[n]$ are lower bounded by the constraints as follows:

\vspace{-0.8em}
{\small
    \begin{subequations}
        \begin{alignat}{1}
        R_{k,u}[n]\geq &\nabla R_{k,u}^r[n](||\boldsymbol{q}_u[n]-\boldsymbol{q}_k||^2-||\boldsymbol{q}_u^r[n]-\boldsymbol{q}_k||^2)\notag\\
        +&R_{k,u}^r[n]=\hat{R}_{k,u}[n],\label{R_k_u_lower_bound}\\
        R_{u,v}[n]\geq &\nabla R_{u,v}^r[n](||\boldsymbol{q}_u[n]-\boldsymbol{q}_v[n]||^2
        -||\boldsymbol{q}_u^r[n]-\boldsymbol{q}_v^r[n]||^2)\notag\\
        +&R_{u,v}^r[n]=\hat{R}_{u,v}[n],\label{R_u_v_lower_bound}\\
        R_{u,M}[n]\geq &\nabla R_{u,M}^r[n](||\boldsymbol{q}_u[n]-\boldsymbol{q}_M||^2-||\boldsymbol{q}_u^r[n]-\boldsymbol{q}_M||^2)\notag\\
        &+R_{u,M}^r[n]=\hat{R}_{u,M}[n],\label{R_u_M_lower_bound}
    \end{alignat}
    \end{subequations}}
    
    \noindent where $R_{k,u}^r[n]$, $R_{u,v}^r[n]$, and $R_{u,M}^r[n]$ are the data transmission rates at the $r$-th iteration for ISD-UAV link, UAV-UAV link, and UAV-MBS link, respectively. Moreover, $\nabla R_{k,u}^r[n]$ denotes the first-order derivative of $R_{k,u}^r[n]$ with respect to $||\boldsymbol{q}_u[n]-\boldsymbol{q}_k||^2$ at the $r$-th iteration, which can be given as
    \vspace{-0.8em}
    
{\small
\begin{align}
&R_{k,u}^r[n]=\theta_{k,u}[n]B_{f}\log_2\Big(1+ \frac{\gamma_k}{||\boldsymbol{q}_u^r[n]-\boldsymbol{q}_k||^2+H_u^2}\Big),\\
&\nabla R_{k,u}^r[n]=\frac{-\theta_{k,u}[n]B_{f} \gamma_k \log_2e}{||\boldsymbol{q}_u^r[n]-\boldsymbol{q}_k||^2+H_u^2}\times\frac{1}{||\boldsymbol{q}_u^r[n]-\boldsymbol{q}_k||^2+H_u^2+\gamma_k}.
\end{align}}

\noindent Besides, $\nabla R_{u,v}^r[n]$ and $\nabla R_{u,M}^r[n]$ denote the first-order derivatives of $R_{u,v}^r[n]$ and $R_{u,M}^r[n]$, respectively, which are calculated as follows:

{\small
\begin{align}
&R_{u,v}^r[n]=\theta_{u,v}[n]B_{c}\log_2\Big(1+ \frac{\gamma_u}{||\boldsymbol{q}_u^r[n]-\boldsymbol{q}_v^r[n]||^2}\Big),\\
&\nabla R_{u,v}^r[n]=\frac{-\theta_{u,v}[n]B_{c}\gamma_u \log_2e}{ (||\boldsymbol{q}_u^r[n]-\boldsymbol{q}_v^r[n]||^2 }\times \frac{1}{ ||\boldsymbol{q}_u^r[n]-\boldsymbol{q}_v^r[n]||^2+\gamma_u},
\end{align}
 }

{\small
        \begin{align}
        &R_{u,M}^r[n]=\theta_{u,M}[n]B_{b}\log_2\big(1+ \frac{\gamma_u}{||\boldsymbol{q}_u^r[n]-\boldsymbol{q}_M||^2+(H_u-H_M)^2}\big),\\
        &\nabla R_{u,M}^r[n]=\frac{-\theta_{u,M}[n]B_{b}\gamma_u\log_2e}{||\boldsymbol{q}_u^r[n]-\boldsymbol{q}_M||^2+(H_u-H_M)^2}\notag \\&\times\frac{1}{||\boldsymbol{q}_u^r[n]-\boldsymbol{q}_M||^2+(H_u-H_M)^2+\gamma_u},
    \end{align}}
    
\noindent where $\gamma_k = P_{k}\beta_0/\sigma^2$.
\end{theorem}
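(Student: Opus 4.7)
The plan is to establish the three lower bounds \eqref{R_k_u_lower_bound}--\eqref{R_u_M_lower_bound} by recognizing each data rate as a convex function of an appropriate squared Euclidean distance and then invoking the first-order Taylor inequality for convex functions. Since all three rates share the common functional form $B\log_2\bigl(1 + \gamma/(s + \mathrm{const})\bigr)$, where $s$ is a squared distance between two horizontal positions, it suffices to carry out the argument once and then specialize.

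First I would fix the ISD--UAV link and introduce the scalar auxiliary variable $s = \|\boldsymbol{q}_u[n] - \boldsymbol{q}_k\|^2$, so the rate becomes $\psi(s) = \theta_{k,u}[n]\,B_f \log_2\!\bigl(1 + \gamma_k/(s + H_u^2)\bigr)$. After the substitution $t = s + H_u^2$, the kernel reduces to $\log(t + \gamma_k) - \log t$, whose second derivative in $t$ equals $1/t^2 - 1/(t+\gamma_k)^2$, which is strictly positive for $\gamma_k > 0$. Hence $\psi$ is convex in $s$. Applying the standard inequality $\psi(s) \geq \psi(s^r) + \psi'(s^r)(s - s^r)$ at the local point $s^r = \|\boldsymbol{q}_u^r[n] - \boldsymbol{q}_k\|^2$ produces exactly \eqref{R_k_u_lower_bound}; the claimed closed-form for the coefficient $\nabla R_{k,u}^r[n]$ follows by differentiating $\psi$ at $s^r$ via the chain rule, and its negative sign reflects that larger squared distance yields a smaller SNR.

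Next I would repeat the same template for the UAV--UAV link with $s = \|\boldsymbol{q}_u[n] - \boldsymbol{q}_v[n]\|^2$ and for the UAV--MBS link with $s = \|\boldsymbol{q}_u[n] - \boldsymbol{q}_M\|^2 + (H_u - H_M)^2$. Each time, after a harmless linear change of variables, the rate reduces to the same convex scalar kernel established above, so the first-order convexity inequality immediately yields bounds \eqref{R_u_v_lower_bound} and \eqref{R_u_M_lower_bound}, and the stated formulas for $\nabla R_{u,v}^r[n]$ and $\nabla R_{u,M}^r[n]$ fall out by direct differentiation.

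The only delicate point is how the convexity step interacts with the original variable $\boldsymbol{q}_u[n]$: the map $\boldsymbol{q}_u[n] \mapsto \|\boldsymbol{q}_u[n] - \boldsymbol{q}_k\|^2$ is itself convex but not affine, so one cannot simply compose convexities to conclude that $R_{k,u}[n]$ is convex in $\boldsymbol{q}_u[n]$. The remedy, which my plan uses throughout, is to perform the Taylor expansion in the scalar $s$ rather than in $\boldsymbol{q}_u[n]$; the resulting bound is affine in the squared distance, and therefore convex in $\boldsymbol{q}_u[n]$, which is precisely the structural property required by the SCA procedure that later solves $\mathbf{SP3}$. This scalar viewpoint is what I expect to be the main subtlety to handle carefully, after which the rest of the proof is routine differentiation.
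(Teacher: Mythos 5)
Your proposal is correct and follows essentially the same route as the paper: both treat each rate as a convex function of the squared distance $s=\|\boldsymbol{q}_u[n]-\boldsymbol{q}_k\|^2$ (and its analogues) and obtain the lower bound from the first-order Taylor expansion, i.e., the global-underestimator property of convex functions, with the remaining two links handled by the identical functional form. Your explicit verification of convexity via the second derivative of the scalar kernel and your remark on expanding in $s$ rather than in $\boldsymbol{q}_u[n]$ merely make precise what the paper's proof asserts without detail.
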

\begin{proof}
    Theorem \ref{lower_bound_of_R} is proved by considering $R_{k,u}[n]$ as one of the three cases. Specifically, $R_{k,u}[n]$ is convex with respect to $||\boldsymbol{q}_u[n]-\boldsymbol{q}_k||^2$. Therefore, the lower bound \eqref{R_k_u_lower_bound} of $R_{k,u}[n]$ can be obtained by the first order Taylor expansion of $||\boldsymbol{q}_u[n]-\boldsymbol{q}_k||^2$ \cite{Boyd2014}. For 
    $R_{u,v}[n]$ and $R_{u,M}[n]$, \eqref{R_u_v_lower_bound} and \eqref{R_u_M_lower_bound} can be achieved in a similar way due to their same functional form to $R_{k,u}[n]$.
\end{proof}

\par \textbf{Second}, by taking the first order Taylor expansion at the given point $\boldsymbol{q}_u^r[n]$ and $\boldsymbol{q}_v^r[n] (u\neq v)$ at the $r$-th iteration, the non-convex constraint \eqref{AAj}, is relaxed as follows:

\vspace{-0.8em}
{\small
    \begin{alignat}{1}
    \label{AAj_gai}
        ||\boldsymbol{q}_u[n]-\boldsymbol{q}_v[n]||^2 &\geq 2(\boldsymbol{q}_u^r[n]-\boldsymbol{q}_v^r[n])^T(\boldsymbol{q}_u[n]-\boldsymbol{q}_v[n])\notag\\
        &-||\boldsymbol{q}_u^r[n]-\boldsymbol{q}_v^r[n]||^2\geq (D^{\min})^2.
    \end{alignat}}

\par \textbf{Finally}, according to Theorem \ref{lower_bound_of_R}, the non-convex constraint \eqref{AAl} can be bounded as

\vspace{-0.8em}
{\small
\begin{alignat}{1}
\label{hat_t_k}
    \hat{t}_k[n] \leq t_k^{\max}.
\end{alignat}}


\par Based on the abovementioned steps, the subproblem $\mathbf{SP3}$ can be rewritten as

\vspace{-0.8em}
{\small
\begin{alignat}{1}
    \mathbf{SP3^{\prime}}:\quad &\min\limits_{\mathbf{Q}} \ \sum\limits_{k\in\mathcal{K}} \hat{t}_k[n],\label{SP3_pie}\\
        \text{s.t.} \ \ &(\ref{AAi}),(\ref{AAk}),\eqref{AAj_gai}, \ \text{and} \ \eqref{hat_t_k}.\notag
\end{alignat}}

\par Problem $\mathbf{SP3^{\prime}}$ is obviously a convex problem. Therefore, we adopt the standard convex optimization tools, such as the CVX solver, to iteratively solve the problem. We summarize the optimization process of problem $\mathbf{SP3^{\prime}}$ in Algorithm \ref{al_UAV_traj}. First, we solve the subproblem $\mathbf{SP3^{\prime}}$ in the $r$-th iteration and obtain the optimal UAV location $Q$ as the local point for the next iteration (line 3). Then, this process is repeated until the difference of the objective function value falls below a given threshold $\epsilon$ between two successive iterations (lines 4 and 5).

\begin{algorithm}
    \caption{UAV Trajectory Control}
    \label{al_UAV_traj}
    \KwIn {$\mathbf{O}^*$, $\mathbf{Z}^*$, $\mathbf{\Theta}^*$, $\mathbf{F}^*$, $\epsilon$}
    \textbf{Initialization:} $\mathbf{Q}^{(0)}$, $r = 0$\;
    \Repeat{$|G^{r+1}-G^{r}|<\epsilon$}
    {
        Solve the convex problem $\mathbf{SP3^{\prime}}$ to obtain the current optimal variable $\mathbf{Q}^{r+1}$, and the objective value $G^{r+1}$\;
        Update $r = r + 1$\;
    }
    \KwOut{$\mathbf{Q}^{*}$}
\end{algorithm}
\vspace{-1em}
\subsection{Main Steps of J$\text{C}^5$A}

\par The main steps of the proposed J$\text{C}^5$A are outlined in Algorithm \ref{C6}. Specifically, the decisions of computation offloading and service caching are obtained through Algorithm \ref{al_offloading_caching} (line 3). Moreover, the communication and computation resource allocations are determined by Algorithm \ref{algorithm_resource_allocation} (line 4). Then, the UAV trajectory control is determined via Algorithm \ref{al_UAV_traj} (line 5). In addition, the total delay is computated (line 6). Finally, update the decisions and iterate the above steps until the algorithm converges (lines 7 to 9).

\subsection{Analysis of J$\text{C}^5$A}

\par The related convergence and complexity analysis of  J$\text{C}^5$A are presented as follows.

\subsubsection{Convergence Analysis} 

\par The convergence analysis of the proposed $\text{JC}^5$A is given by Theorem \ref{th_P_is_converged}.

\begin{theorem}
    \label{th_P_is_converged}
    The $\text{JC}^5$A  given in Algorithm \ref{C6} can be converged within a finite number of iterations.
\end{theorem}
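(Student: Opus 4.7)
The plan is to establish convergence of $\text{JC}^5$A as a block coordinate descent procedure over the three subproblem blocks $(\mathbf{O},\mathbf{Z})$, $(\mathbf{\Theta},\mathbf{F})$, and $\mathbf{Q}$, and then combine monotonicity with a lower bound on the objective to conclude termination in a finite number of iterations under the stopping tolerance $\epsilon$. Concretely, let $G^r \triangleq G(\mathbf{O}^r,\mathbf{Z}^r,\mathbf{\Theta}^r,\mathbf{F}^r,\mathbf{Q}^r) = \sum_{k\in\mathcal{K}} t_k[n]$ denote the outer-loop objective at iteration $r$ of Algorithm \ref{C6}.

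First I would argue that each of the three inner routines produces a non-increasing sequence on its block. For Algorithm \ref{al_offloading_caching}, the BSUMM update minimizes the upper-bound augmented Lagrangian $\tilde{L}_i$ over one block $\boldsymbol{x}_i$ while the other blocks are fixed; by the construction \eqref{upper_bound_function} with the quadratic proximal term and by the standard BSUMM convergence result in \cite{Hong2016, Razaviyayn2013}, the primal-dual iterates satisfy the stationarity conditions \eqref{Omega_1}--\eqref{Omega_2} in finitely many cyclic passes, and the rounded binary solution $(\mathbf{O}^*,\mathbf{Z}^*)$ is feasible for $\mathbf{P}$ whenever the integrality gap $\Xi=1$ by Theorem \ref{integrality_gap}. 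For Algorithm \ref{algorithm_resource_allocation}, $\mathbf{SP2}$ is convex in $(\mathbf{\Theta},\mathbf{F})$, so the CVX solver returns a global minimizer that cannot increase the objective. For Algorithm \ref{al_UAV_traj}, the surrogate constraints \eqref{R_k_u_lower_bound}--\eqref{R_u_M_lower_bound} and \eqref{AAj_gai} are obtained from first-order Taylor expansions that are tight at the current iterate $\mathbf{Q}^r$; hence the SCA update satisfies $\sum_k \hat{t}_k[n] \geq \sum_k t_k[n]$ with equality at $\mathbf{Q}^r$, and minimizing the convex surrogate $\mathbf{SP3}^\prime$ therefore yields $G(\cdot,\mathbf{Q}^{r+1}) \leq G(\cdot,\mathbf{Q}^r)$, which is the classical monotone descent property of SCA.

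Chaining these three block updates in one outer iteration, I obtain
\begin{equation}
G^{r+1} \;\leq\; G(\mathbf{O}^{r+1},\mathbf{Z}^{r+1},\mathbf{\Theta}^{r+1},\mathbf{F}^{r+1},\mathbf{Q}^{r}) \;\leq\; G(\mathbf{O}^{r+1},\mathbf{Z}^{r+1},\mathbf{\Theta}^{r},\mathbf{F}^{r},\mathbf{Q}^{r}) \;\leq\; G^{r},
\end{equation}
so $\{G^r\}$ is monotonically non-increasing. Since $t_k[n]\geq 0$ for every $k$ by construction \eqref{eq_t_k_loc}--\eqref{eq_t_k_M1} and the energy, caching, and trajectory constraints \eqref{AAa}--\eqref{AAl} restrict the feasible set to a compact domain, the sequence $\{G^r\}$ is bounded below by $0$. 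By the monotone convergence theorem, $\{G^r\}$ has a finite limit $G^\star$, so for any prescribed tolerance $\epsilon>0$ there exists a finite $R(\epsilon)$ with $|G^{r+1}-G^{r}|<\epsilon$ for all $r\geq R(\epsilon)$, which is exactly the stopping rule of Algorithm \ref{C6}. Hence the outer loop terminates in finitely many iterations.

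The main obstacle is ensuring that the monotone descent across blocks is not violated by the threshold-rounding step at the end of the BSUMM solver, since rounding a relaxed $\boldsymbol{x}^r\in[0,1]$ to a binary vector can, in principle, perturb both the objective value and the constraint feasibility. I would discharge this by appealing to Theorem \ref{integrality_gap}: the rounding is accepted only when $\Xi=1$, i.e., $\Delta_1=\Delta_2=\Delta_3=0$, which guarantees that $(\mathbf{O}^*,\mathbf{Z}^*)$ lies in the original feasible set of $\mathbf{SP1}$ and evaluates to an objective no larger than the relaxed minimum; consequently the inequality $G(\mathbf{O}^{r+1},\mathbf{Z}^{r+1},\mathbf{\Theta}^r,\mathbf{F}^r,\mathbf{Q}^r)\leq G^r$ is preserved and the chain of monotone descents above remains valid, completing the proof.
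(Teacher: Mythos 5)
Your proof takes essentially the same route as the paper's: a three-block coordinate-descent argument in which each inner routine (BSUMM for $(\mathbf{O},\mathbf{Z})$, convex optimization for $(\mathbf{\Theta},\mathbf{F})$, SCA with a tight first-order surrogate for $\mathbf{Q}$) is shown not to increase the objective, and the chained inequalities yield a monotonically non-increasing sequence $\{G^r\}$. You are in fact somewhat more complete than the paper: the paper stops at the non-increasing property \eqref{eq_non_increasing}, whereas you explicitly add the lower bound $t_k[n]\geq 0$, invoke the monotone convergence theorem, and tie the finite limit to the $\epsilon$-stopping rule of Algorithm~\ref{C6}, which is what actually delivers ``finite number of iterations.'' One caveat: in your final paragraph the claim that the rounded binary solution ``evaluates to an objective no larger than the relaxed minimum'' is backwards --- rounding maps a point of the relaxed feasible set to a smaller (binary) feasible set, so its objective can only be $\geq$ the relaxed optimum; the condition $\Xi=1$ in Theorem~\ref{integrality_gap} guarantees feasibility ($\Delta=0$) but says nothing about the objective not increasing under rounding. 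What the descent chain actually needs is that the rounded $(\mathbf{O}^{r+1},\mathbf{Z}^{r+1})$ does not exceed the \emph{previous} iterate's value $G^r$; neither you nor the paper establishes this, so it remains an (implicitly assumed) gap common to both arguments rather than a defect unique to yours.
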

\begin{proof}
    Algorithm \ref{C6} consists of two levels of iterations. Specifically, the outer level aims to optimize the strategies by iteratively executing Algorithms 1, 2, and 3, while the inner level aims to optimize the corresponding strategies returned by Algorithms 1, 2 and 3 in each iteration of the outer level. Moreover, the iterative structures of the inner and outer levels are similar. Namely, they both optimize the objective function or the upper bound function. Therefore, we use the outer level as an example for demonstration, which is detailed as follows.
    
    \par \textbf{First}, at the $(r+1)$-th iteration, the optimal $\mathbf{O}^{r+1}$ and  $\mathbf{Z}^{r+1}$ with given  $\mathbf{\Theta}^r$, $\mathbf{F}^r$, and  $\mathbf{Q}^r$ can be achieved by solving problem $\mathbf{SP1}$. Therefore, the objective value can be bounded as follows:

    \vspace{-0.8em}
{\small
    \begin{alignat}{1}
        \label{convergence_proof_1}
t(\mathbf{O}^{r},\mathbf{Z}^r,\mathbf{\Theta}^r,\mathbf{F}^r,\mathbf{Q}^r)&\geq t_{\mathbf{O},\mathbf{Z}}^{r,ub}(\mathbf{O}^{r+1},\mathbf{Z}^{r+1},\mathbf{\Theta}^r,\mathbf{F}^r,\mathbf{Q}^r)\notag\\
        &\geq t_{\mathbf{O},\mathbf{Z}}^{r}(\mathbf{O}^{r+1},\mathbf{Z}^{r+1},\mathbf{\Theta}^r,\mathbf{F}^r,\mathbf{Q}^r),
    \end{alignat}}
    \noindent where $t(\mathbf{O}^{r},\mathbf{Z}^r,\mathbf{\Theta}^r,\mathbf{F}^r,\mathbf{Q}^r)$ denotes the objective value at the $r$-th iteration, 
    $t_{\mathbf{O},\mathbf{Z}}^{r}(\cdot)$ represents the objective value of solving computation offloading and service caching, and $t_{\mathbf{O},\mathbf{Z}}^{r,ub}(\cdot)$ denotes the upper bound function of $t_{\mathbf{O},\mathbf{Z}}^{r}(\cdot)$. 
    
   \par \textbf{Second}, with the optimized $\mathbf{O}^{r+1}$ and $\mathbf{Z}^{r+1}$, and given $\mathbf{Q}^r$, the optimal $\mathbf{\Theta}^{r+1}$ and $\mathbf{F}^{r+1}$ can be obtained in the ($r+1$)-th iteration as follows:

   \vspace{-0.8em}
{\small
    \begin{alignat}{1}
    \label{convergence_proof_2}
        &t_{\mathbf{O},\mathbf{Z}}^{r}(\mathbf{O}^{r+1},\mathbf{Z}^{r+1},\mathbf{\Theta}^r,\mathbf{F}^r,\mathbf{Q}^r)\notag  \\
        = \ &  t_{\mathbf{\Theta},\mathbf{F}}(\mathbf{O}^{r+1},\mathbf{Z}^{r+1},\mathbf{\Theta}^r,\mathbf{F}^r,\mathbf{Q}^r)\notag\\
        \geq \ & t_{\mathbf{\Theta},\mathbf{F}}(\mathbf{O}^{r+1},\mathbf{Z}^{r+1},\mathbf{\Theta}^{r+1},\mathbf{F}^{r+1},\mathbf{Q}^r),
    \end{alignat}}
    
    \noindent where $t_{\mathbf{\Theta},\mathbf{F}}(\cdot)$ denotes the objective value of solving communication and computation resource allocation.

    \par \textbf{Third}, with the optimized $\mathbf{O}^{r+1}$ and $\mathbf{Z}^{r+1}$, the optimal trajectory $\mathbf{Q}^{r+1}$ can be obtained by solving the upper bound of the objective function. Consequently, the objective value can be bounded as

    \vspace{-0.8em}
{\small
    \begin{alignat}{1}
    \label{convergence_proof_3}
        &t_{\mathbf{\Theta},\mathbf{F}}(\mathbf{O}^{r+1},\mathbf{Z}^{r+1},\mathbf{\Theta}^{r+1},\mathbf{F}^{r+1},\mathbf{Q}^r) \notag\\
        = \ & t_{\mathbf{Q}}^{r,ub}(\mathbf{O}^{r+1},\mathbf{Z}^{r+1},\mathbf{\Theta}^{r+1},\mathbf{F}^{r+1},\mathbf{Q}^r)\notag\\
        \geq \ & t_{\mathbf{Q}}^{r,ub}(\mathbf{O}^{r+1},\mathbf{Z}^{r+1},\mathbf{\Theta}^{r+1},\mathbf{F}^{r+1},\mathbf{Q}^{r+1})\notag\\
        \geq \ & t(\mathbf{O}^{r+1},\mathbf{Z}^{r+1},\mathbf{\Theta}^{r+1},\mathbf{F}^{r+1},\mathbf{Q}^{r+1}),
    \end{alignat}}
    
    \noindent where $t_{\mathbf{Q}}^{r,ub}(\cdot)$ denotes the upper bound of the objective value of solving the UAV trajectory.
    
    \par \textbf{Finally}, combining \eqref{convergence_proof_1} to \eqref{convergence_proof_3}, we have
    \begin{sequation}  
    \label{eq_non_increasing}t(\mathbf{O}^{r},\mathbf{Z}^{r},\mathbf{\Theta}^{r},\mathbf{F}^{r},\mathbf{Q}^{r})\geq t(\mathbf{O}^{r+1},\mathbf{Z}^{r+1},\mathbf{\Theta}^{r+1},\mathbf{F}^{r+1},\mathbf{Q}^{r+1}).
    \end{sequation}
    
    \par  It can be deduced from \eqref{eq_non_increasing} that the objective value of $\mathbf{P}$ is non-increasing over each iterative. Therefore, the convergence of the proposed J$\text{C}^5$A is proved.
\end{proof}

\vspace{-1.5em}
\begin{algorithm}
\label{C6}
    \caption{J$\text{C}^5$A}
    \textbf{Initialization:} $\mathbf{O}^{(0)}$, $\mathbf{Z}^{(0)}$, $\mathbf{\Theta}^{(0)}$, $\mathbf{F}^{(0)}$,$\mathbf{Q}^{(0)}$,$t^{(0)}$,$r = 0$\;
    \Repeat{$|t^{(r)}-t^{(r-1)}|< \epsilon$}
    {
        Call Algorithm \ref{al_offloading_caching} to obtain $\mathbf{O}^{*}$ and $\mathbf{Z}^{*}$\;
        Call Algorithm \ref{algorithm_resource_allocation} to obtain $\mathbf{\Theta}^{*}$ and $\mathbf{F}^{*}$\;
        Call Algorithm \ref{al_UAV_traj} to obtain $\mathbf{Q}^{*}$\;
        Compute the total delay $t^{(r)}$\;
        Update $\mathbf{O}^{(r+1)} = \mathbf{O}^*$, $\mathbf{Z}^{(r+1)} = \mathbf{Z}^*$, $\mathbf{\Theta}^{(r+1)} = \mathbf{\Theta}^*$, $\mathbf{F}^{(r+1)} = \mathbf{F}^*$, and $\mathbf{Q}^{(r+1)} = \mathbf{Q}^*$\;
        Update $r = r + 1$\;
    }
    \Return{$\mathbf{O}^*$, $\mathbf{Z}^*$, $\mathbf{\Theta}^*$, $\mathbf{F}^*$, $\mathbf{Q}^*$}\;
\end{algorithm}

\subsubsection{Computational Complexity} 

\par The computational complexity of the proposed algorithm is given as Theorem $\ref{th_complexity}$. 
\begin{theorem}
    \label{th_complexity}
    The proposed algorithm has a polynomial worst-case complexity in each time slot, i.e., $\mathcal{O}(R_4(R_1N_1(\log(1/\epsilon)+1)+R_2N_2+R_3N_3))$, where $R_1$, $R_2$, $R_3$, and $R_4$ represent the numbers of iterations required for Algorithm 1, Algorithm 2, Algorithm 3, and Algorithm 4, respectively. Moreover, $\epsilon$ denotes the search accuracy. Besides, $N_1$, $N_2$, and $N_3$  are the numbers of variables in $\mathbf{SP1}$, $\mathbf{SP2}$ and $\mathbf{SP3}$, respectively.
\end{theorem}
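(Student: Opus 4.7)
The plan is to bound the per-time-slot cost of each sub-routine separately and then compose them through the outer loop of Algorithm \ref{C6}. I would organize the proof as three component bounds followed by a composition argument.

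For Algorithm \ref{al_offloading_caching}, I would analyze one BSUMM cycle as three actions: cyclic block selection, minimization of the upper-bound augmented Lagrangian $\tilde L_i$ in \eqref{problem_Lag}, and the multiplier updates \eqref{mu}--\eqref{lambda}. Because the proximal term $\tfrac{\sigma_0}{2}\|\boldsymbol{x}_i-\boldsymbol{x}_i^{r-1}\|^2$ in \eqref{upper_bound_function} makes each block subproblem strongly convex in $N_1$ variables, an interior-point (or projected-gradient) solve reaches accuracy $\epsilon$ in $\mathcal{O}(N_1\log(1/\epsilon))$; the dual updates and the final threshold-rounding step contribute only $\mathcal{O}(N_1)$. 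Summing over the $R_1$ BSUMM cycles (which is finite by the convergence established in Theorem~\ref{th_P_is_converged}) yields $\mathcal{O}(R_1 N_1(\log(1/\epsilon)+1))$.

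For Algorithm \ref{algorithm_resource_allocation}, the subproblem $\mathbf{SP2}$ is jointly convex in $\{\mathbf{\Theta},\mathbf{F}\}$, so each CVX call handles $N_2$ variables at a per-iteration cost scaling linearly in $N_2$, giving $\mathcal{O}(R_2 N_2)$ across $R_2$ iterations. For Algorithm \ref{al_UAV_traj}, the SCA loop uses Theorem \ref{lower_bound_of_R} and the Taylor linearization \eqref{AAj_gai} to transform $\mathbf{SP3}$ into the convex program $\mathbf{SP3'}$ over $N_3$ variables, and the same per-iteration accounting yields $\mathcal{O}(R_3 N_3)$. Since Algorithm \ref{C6} invokes Algorithms \ref{al_offloading_caching}--\ref{al_UAV_traj} sequentially inside each of its $R_4$ outer iterations and terminates when the objective gap falls below tolerance, the total per-time-slot cost is the product
\begin{equation*}
\mathcal{O}\Bigl(R_4\bigl(R_1 N_1(\log(1/\epsilon)+1)+R_2 N_2+R_3 N_3\bigr)\Bigr),
\end{equation*}
which is polynomial in every problem dimension.

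The main obstacle is justifying the per-call cost of the block-Lagrangian solve as $\mathcal{O}(N_1\log(1/\epsilon))$ rather than the worst-case interior-point scaling $\mathcal{O}(N_1^{3.5}\log(1/\epsilon))$; this requires exploiting the separable, quadratically regularized structure of $\tilde L_i$ so that each block minimization reduces to coordinate-wise or closed-form updates. A secondary subtlety is ensuring the counts $R_1,R_2,R_3,R_4$ are all finite, for which I would reuse the monotone non-increasing objective argument from \eqref{eq_non_increasing} together with the fact that the objective is bounded below by zero.
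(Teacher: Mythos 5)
Your proof takes essentially the same route as the paper: both decompose the cost into the three subproblem solves, count the variables $N_1$, $N_2$, $N_3$, and compose them multiplicatively through the $R_4$ outer iterations of Algorithm~\ref{C6} to obtain the identical final bound. The only real difference is where the $\log(1/\epsilon)$ factor is charged --- the paper attributes it to the iteration complexity of BSUMM itself in reaching an $\epsilon$-optimal solution (citing its convergence rate), whereas you attribute it to the inner per-block convex solve; both accounts are at the same level of informality regarding the per-iteration cost of the underlying convex-programming calls, so your stated obstacle about interior-point scaling applies equally to the paper's own argument.
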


\begin{proof}
    For the subproblem $\mathbf{SP1}$, it can be derived that BSUMM takes $\mathcal{O}(\log(1/\epsilon))$ iterations to find an $\epsilon$-optimal solution, which is known as the sub-linear convergence \cite{Razaviyayn2013}. Moreover, problem $\mathbf{SP1}$ has $N_1 = K+K(U+1)+KU$ decision variables, including $K$ variables of $\overline{\mathbf{X}}$, $K(U+1)$ variables of $\overline{\mathbf{Y}}$, and $KU$ variables of $\overline{\mathbf{Z}}$. Therefore, the computational complexity of $\mathbf{SP1}$ can be obtained as $C_1 = \mathcal{O}(\log(1/\epsilon)N_1+N_1)$. For the subproblem $\mathbf{SP2}$, let $N_2 = U^2+K(U+1)+2U$ denote the number of optimal variables, and the computational complexity can be easily given as $C_2 = \mathcal{O}(N_2)$. For the subproblem $\mathbf{SP3}$, the number of variables is $N_3 = 2U$, and then the computational complexity is $C_3 = \mathcal{O}(N_3)$. Finally, denoting the numbers of iterations required in solving these three subproblems and problem $\mathbf{P}$ as $R_1$, $R_2$, $R_3$, and $R_4$, respectively, the complexity of Algorithm 4 is $\mathcal{O}(R_4(R_1N_1(\log(1/\epsilon)+1)+R_2N_2+R_3N_3))$.
\end{proof}

\section{Simulation Results and Analysis}
\label{sec:Simulation Results And Analysis}

\par In this section, we conduct simulations to verify the effectiveness of the proposed J$\text{C}^5$A approach.

\subsection{Simulation Setup}

\par This section presents the scenarios, parameters, evaluation metrics, and comparison approaches.

\subsubsection{Scenarios}

\par  We consider a collaborative aerial MEC-assisted ICPS that consists of 30 random-distributed ISDs, an MBS, and 4 UAVs within the area of $1000 \times 1000$ $\text{m}^2$. Moreover, the system timeline is set as 50 s, which is divided into 50 time slots with equal length of 1 s.

\subsubsection{Parameters}

\par  The altitude of UAVs is set as $H$ = 100 m, and the initial positions of UAVs are set as $\boldsymbol{q}_1^I=[250,250]$, $\boldsymbol{q}_2^I=[750,250]$, $\boldsymbol{q}_3^I=[250,750]$, and $\boldsymbol{q}_4^I=[750,750]$. The simulation parameters are summarized in Table \ref{tab_simuParameter}.

\begin{table}[t]
\caption{Simulation Parameters}
\label{tab_simuParameter}
    \centering
    \begin{tabular}{c | c || c | c}
    \toprule
    Parameters & Values & Parameters & Values \\
    \midrule
    $\epsilon$ & $10^{-3}$ &   $\beta_0$ & $10^{-5}$ \\
    $B_f$ & $15$ MHz & $B_c$ & $10$ MHz \cite{Wang2022} \\
    $B_b$ & $5$ MHz  & $N$ & $50$ \cite{Zhou2017}\\ $\tau$ & 1s & $\sigma^2$ & $1.0\times 10^{-12}$ W \\
    $H$ & 100 m \cite{Zhang2020a} & $H_M$ & 25 m \\
    $F_k^{\max}$ & [0.5,1] GHz & $d_k$ & [0.5,3] Mb \\
    $c_k$ & [300,600] cycles/bit & $P_k$ & [10,20] dBm\\
    $F_u^{\max}$ & [15,20] GHz & $P_u$ & [20,23] dBm\\
    $C_u^{\max}$ & [5,10] & $\eta$ & 2.0 \\
    $F_M^{\max}$ & 20 GHz & $t_k$ & [0.5,1.0] s \\ 
    $V^{\max}$ & 50 m/s & $D^{\min}$ & 10 m \cite{Ji2020}\\
    \bottomrule
    \end{tabular}
\end{table}

\subsubsection{Evaluation Metrics}

\par  To evaluate the overall performance of the proposed J$\text{C}^5$A, we adopt the following indicators.  \textit{i)} Average completion delay (ACD) $(\sum\limits_{n\in\mathcal{N}}\sum\limits_{k\in\mathcal{K}}t_k[n]/|\mathcal{K}|)/N$, which indicates the average delay for completing the tasks of ISDs successfully. \textit{ii)} Average processing rate (APR) $\sum\limits_{n\in\mathcal{N}}\sum\limits_{k\in\mathcal{K}}d_k[n]c_k[n]/\sum\limits_{n\in\mathcal{N}}\sum\limits_{k\in\mathcal{K}}t_k[n]$, which indicates the average number of CPU cycles that are completed per time slot. \textit{iii)} Average service caching hit ratio (ASCHR) $(\sum\limits_{n\in\mathcal{N}}\sum\limits_{k\in\mathcal{K}} \sum\limits_{u\in\mathcal{U}} o_k^u[n]/\sum\limits_{u\in\mathcal{U}}C_u^{\max})/N$, which indicates the ratio of success cache hits of UAVs.

\subsubsection{Comparison Approaches}

\par  This work evaluates the J$\text{C}^5$A in comparison with the following approaches. \textit{i) Local computing (LC)}: All tasks are processed on ISDs locally. \textit{ii) All offloading (AO)}: The computation tasks are offloaded to UAVs or the MBS. \textit{iii) Static UAV (SU)}: The UAVs are deployed at stationary points without trajectory control. \textit{iv) Equal bandwidth and computing capacity (EBCC)~\cite{Zhou2021}}: The communication and computation resources are allocated evenly. \textit{v) Penalty successive convex approximation (P-SCA)}~\cite{Xu2022}: The decisions of computation offloading and service caching are optimized by the centralized method of P-SCA instead of BSUMM.  \textit{vi) Proximal policy optimization (PPO)-based joint optimization (PJO)~\cite{xu2024trajectory}}: All decisions of this work are jointly determined by the PPO algorithm. \textit{vii) Deep deterministic policy gradient (DDPG)-based joint optimization (DJO)~\cite{Miao2024}}: All decisions of this work are jointly determined by the DDPG algorithm.

\subsection{Evaluation Results}

\par In this section, we first evaluate the effectiveness of the proposed J$\text{C}^5$A with default parameters. Then, we investigate the impacts of different parameters on the performance of the proposed J$\text{C}^5$A and the comparison approaches.

\subsubsection{Effectiveness Evaluation}  Fig.~\ref{fig_SU} illustrates the convergence of the proposed J$\text{C}^5$A with different penalty parameters. We can observe that the figure that as the number of iterations increases, the average completion delay decreases and eventually converges, aligning with the analysis in Theorem \ref{th_P_is_converged}. Besides, a smaller parameter $\rho$ leads to a faster convergence speed. The reason is that a smaller penalty parameter results in a tighter approximation of the upper bound for the original objective function, thereby facilitating faster convergence. However, when $\rho = 0$, the convergence curve diminishes significantly because it becomes more complex to solve the original optimization problem directly. Therefore, the results in Fig.~\ref{fig_SU} demonstrate that the proposed J$\text{C}^5$A with a reasonable penalty parameter can reach a stable state within a finite number of iterations.

\begin{figure}[!hbt] 
\vspace{-1em}
    \centering
    \setlength{\abovecaptionskip}{0pt}%
    \setlength{\belowcaptionskip}{0pt}%
    \includegraphics[width =3in]{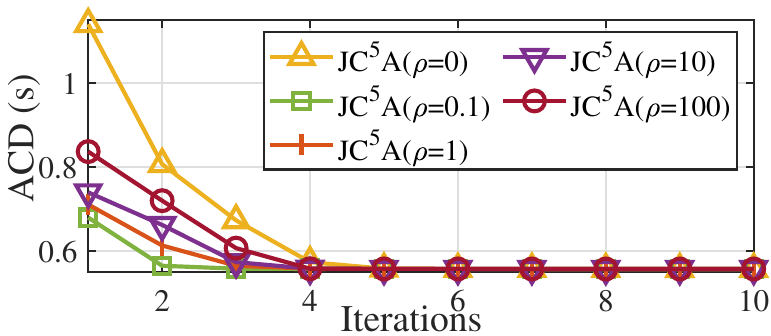}
    \caption{Convergence of J$\text{C}^5$A.}
    \label{fig_SU}
    \vspace{-0.9 em}
\end{figure}

\subsubsection{Impact of Parameters.} We evaluate the impact of UAV computation resources, the number ISDs, and UAV  service caching storage in this subsection.

\par \textbf{\textit{Impact of UAV Computation Resources.}} Figs. \ref{fig_1}(a), \ref{fig_1}(b), and \ref{fig_1}(c) show the impact of the average computation resource of UAVs on ACD, APR, and ASCHR, respectively, for the considered eight approaches. It can be observed from Fig. \ref{fig_1} that as the computing resource of UAVs increases, the proposed J$\text{C}^5$A consistently outperforms the other approaches in terms of ACD and APR, while showing a moderate performance in ASCHR among the eight approaches. This is due to the following reasons. First, the worst performance of LC is because it does not rely on the computing resources of UAVs, thereby missing the advantages of distributed processing and caching provided by the UAVs. Moreover, the entire UAV offloading strategy of AO, the static trajectory control of SU, the equal resource allocation method of EBCC, and the centralized control of P-SCA make these approaches highly sensitive to the resources of UAVs, resulting in less efficient resource utilization. Besides, the less efficiency of PCCT and DCCT is due to the long training periods, which is a result of the complex and hybrid action spaces of the problem. 

\par Although the proposed J$\text{C}^5$A achieves suboptimal performance in service caching hit ratio, it mitigates the additional costs associated with frequent service caching by exploiting the collaborative capabilities among the UAVs. As a result, the computing resources of UAVs can be fully utilized, leading to the decreased service delay and increased processing rate. In summary, the simulation results show that the proposed J$\text{C}^5$A is able to achieve superior performances in terms of ACD and APR by effectively utilizing the available computation resource of UAVs. This makes it particularly  suitable for our considered delay-sensitive and computation-intensive aerial MEC-assisted ICPS system. 

\begin{figure*}[!hbt] 
	\centering
	\subfigure[Average completion delay]
	{
		\begin{minipage}[t]{0.31\linewidth}
			\raggedleft
			\includegraphics[width=2.3in]{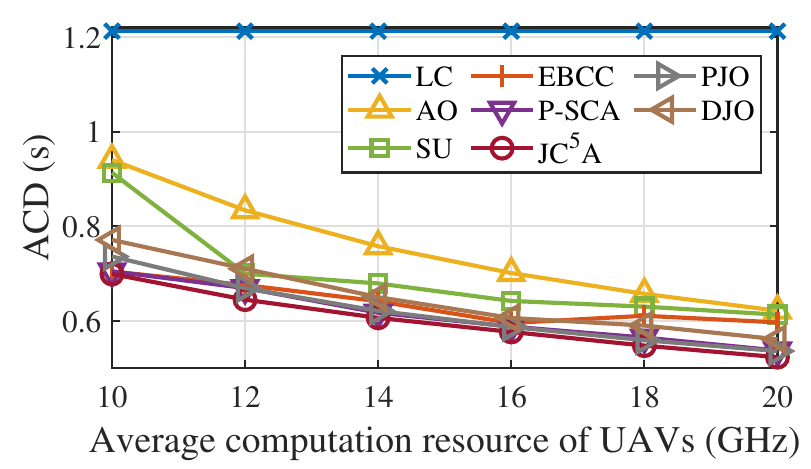}
		\end{minipage}
	}
	\subfigure[Average processing rate]
	{
		\begin{minipage}[t]{0.31\linewidth}
			\centering
			\includegraphics[width=2.3in]{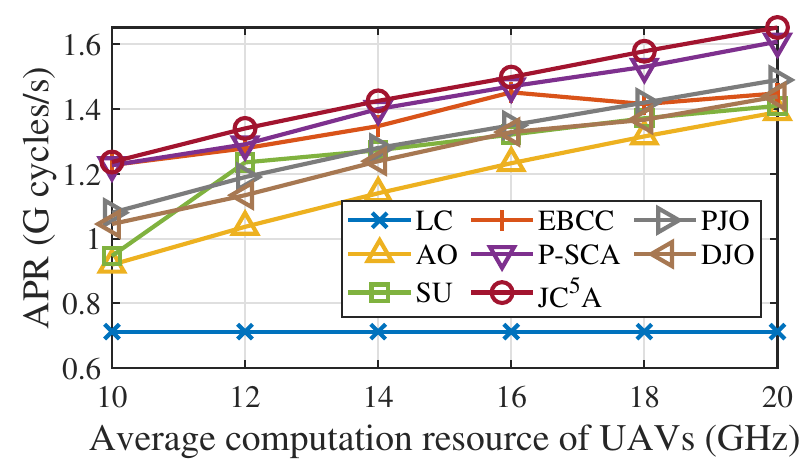}	
		\end{minipage}
	}
	\subfigure[Average service caching hit ratio]
	{
		\begin{minipage}[t]{0.31\linewidth}
			\centering
			\includegraphics[width=2.3in]{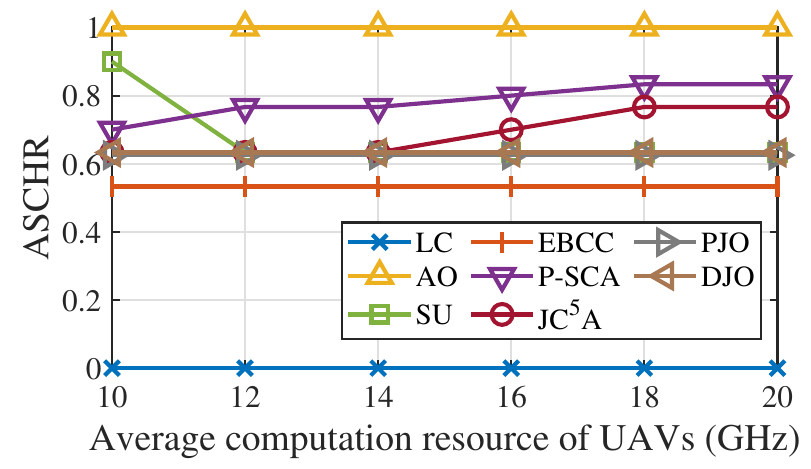}
		\end{minipage}
	}
	\centering
 \vspace{-0.5em}
\caption{System performance with different average computation resources of UAVs.}
         \label{fig_1}
         \vspace{-1.3em}
\end{figure*}

\begin{figure*}[!hbt] 
	\centering
	\subfigure[Average completion delay]
	{
		\begin{minipage}[t]{0.31\linewidth}
			\raggedleft
			\includegraphics[width=2.3in]{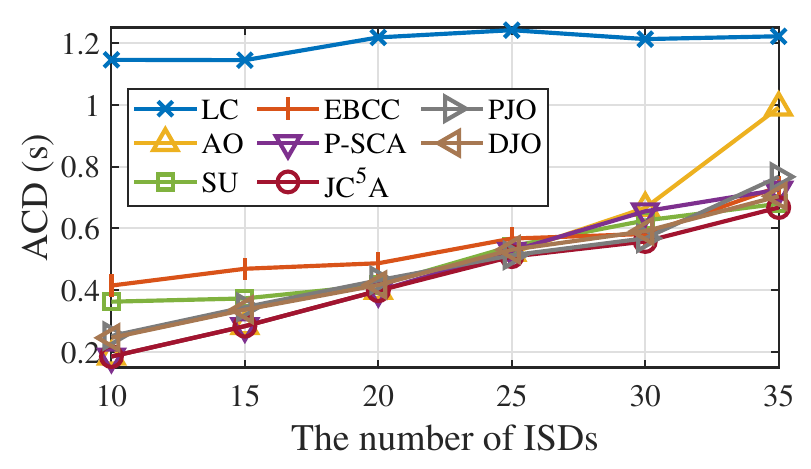}
		\end{minipage}
	}
	\subfigure[Average processing rate]
	{
		\begin{minipage}[t]{0.31\linewidth}
			\centering
			\includegraphics[width=2.3in]{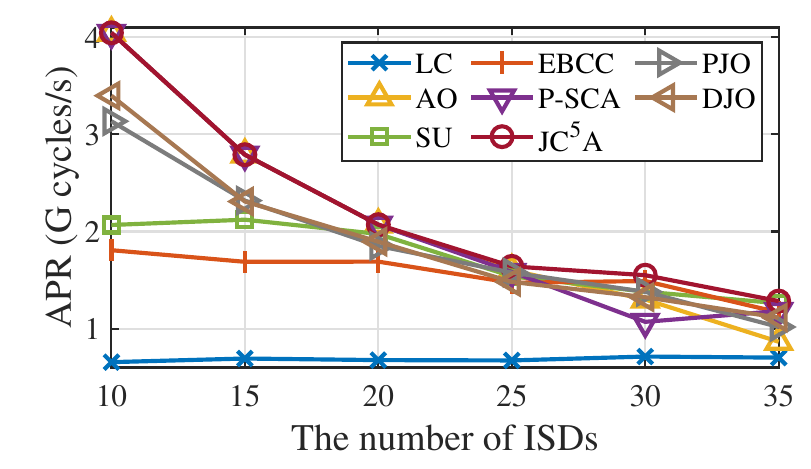}	
		\end{minipage}
	}
	\subfigure[Average service caching hit ratio]
	{
		\begin{minipage}[t]{0.31\linewidth}
			\centering
			\includegraphics[width=2.3in]{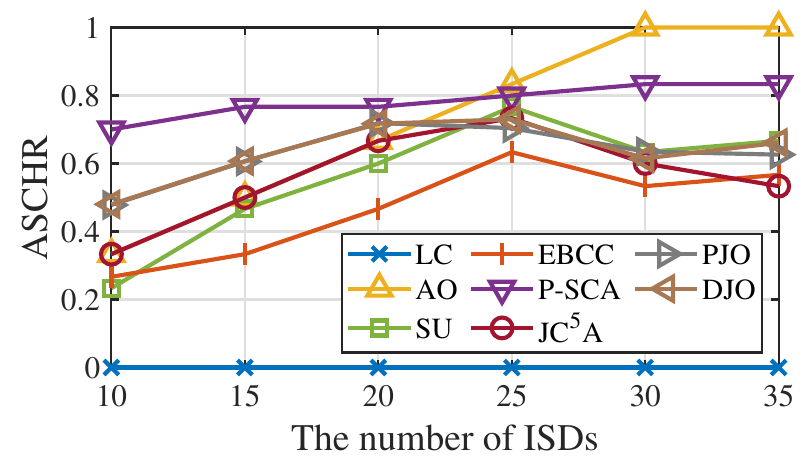}
		\end{minipage}
	}
	\centering
  \vspace{-0.5em}
	\caption{System performance with different numbers of ISDs.}
 	\label{fig_2}

 \vspace{-1.4em}
\end{figure*}

\par \textbf{\textit{Impact of ISD Numbers.}} Figs. \ref{fig_2}(a), \ref{fig_2}(b), and \ref{fig_2}(c) display the impact of the number of ISDs on ACD, APR, and ASCHR, respectively, for the abovementioned eight approaches. It can be observed from Figs. \ref{fig_2}(a) and \ref{fig_2}(b) that the proposed J$\text{C}^5$A outperforms the benchmark approaches in terms of ACD and APR with the increasing of the ISDs. However, it exhibits a medium performance level in ASCHR as the number of ISDs grows, and several factors contribute to these outcomes. First, the inferior performance of LC is because each resource-constrained ISD processes tasks locally. Moreover, the computation offloading of AO, the UAV trajectory control of SU, and the resource allocation of EBCC exhibit inefficiencies due to their static and simplified strategies. Furthermore, although P-SCA achieves relatively superior ASCHR, the centralized optimization of computation offloading and service caching results in increased computation costs. Besides, PJO and DJO may require extensive environmental interactions to achieve the optimal outcomes since the mixed-integer and coupled decision space becomes potentially large as the number of ISDs increases, leading to higher computational complexity and long service delay.

\par Comparatively, J$\text{C}^5$A can achieve the exceptional strategies by managing the mutual-coupled and mix-integer decisions through decoupling the original problem into computationally light weight subproblems. Although J$\text{C}^5$A has a lower ASCHR, it can successfully minimize the service delay by decreasing the frequency of cache replacements. In summary, the proposed J$\text{C}^5$A exhibits superior adaptability to dense scenarios, achieving superior performance in ASD and APR with a trade off in reduced ASCHR.

\par \textbf{\textit{Impact of UAV Service Caching Storage.}} Figs. \ref{fig_3}(a), \ref{fig_3}(b), and \ref{fig_3}(c) illustrate the impact of the average service caching storage of UAVs on ACD, APR, and ASCHR for the eight approaches. First, it can be observed that LC consistently exhibits the worst performance as the service caching storage of UAVs increases, which is obvious due to the independent of the caching capability of UAVs. Moreover, AO, SU, EBCC, P-SCA, PJO, DJO, and J$\text{C}^5$A show an overall downward trend in ACD, an upward trend in APR, and a deceasing tendency in ASCHR. This is because the number of services rises with the increasing of the UAV service caching storage. This is due to the fact that an increase in UAV service caching storage capacity enables the storage of a broader range and a larger number of services. Particularly, AO shows significant variations as the service caching storage of UAVs extends, which can be attributed to the highly dependence of AO on the caching resources of UAVs. In addition, the proposed J$\text{C}^5$A demonstrates superior performance in terms of ACD and APR, while exhibiting relative moderate performance in ASCHR. The main reasons are two fold. On the one hand, the proposed J$\text{C}^5$A aims to minimize the service delay by exploiting the collaborative computing capabilities of MEC servers. On the other hand, J$\text{C}^5$A employs a computationally light weight method for problem solving to ensure the delay for ISDs. In conclusion, as the service caching storage of UAVs increases, despite a decrease in the service caching hit ratio, the proposed J$\text{C}^5$A continues to deliver the superior performances in ASD and APR.

\begin{figure*}[!hbt] 
	\centering
	\subfigure[Average completion delay]
	{
		\begin{minipage}[t]{0.31\linewidth}
			\raggedleft
			\includegraphics[width=2.3in]{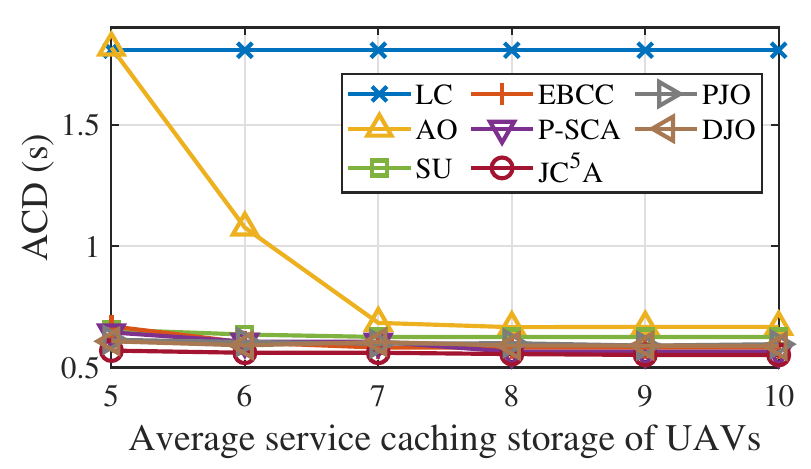}
		\end{minipage}
	}
	\subfigure[Average processing rate]
	{
		\begin{minipage}[t]{0.31\linewidth}
			\centering
			\includegraphics[width=2.3in]{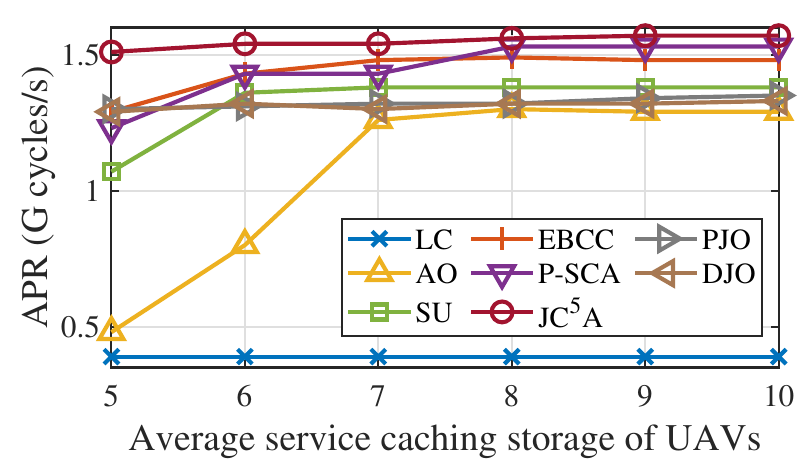}	
		\end{minipage}
	}
	\subfigure[Average service caching hit ratio]
	{
		\begin{minipage}[t]{0.31\linewidth}
			\centering
			\includegraphics[width=2.3in]{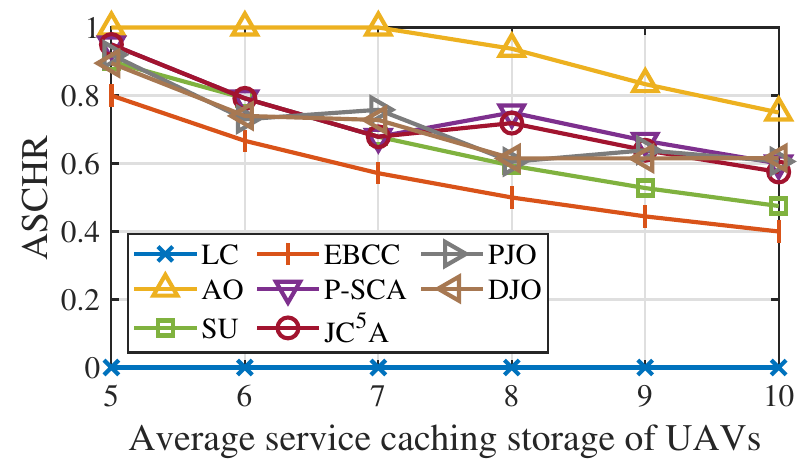}
		\end{minipage}
	}
	\centering
  \vspace{-0.5em}
	\caption{System performance with average service caching storage of UAVs.}
 	\label{fig_3}
	\vspace{-1.5 em}
\end{figure*}

\section{Conclusion}
\label{sec:Conclusion}

\par In this work, we have studied collaborative computation offloading, caching, communication, computation, and trajectory control in an aerial MEC-assisted ICPS. First, we  have designed an aerial-terrestrial UAV collaborative architecture, which consists of an MBS, a cluster of UAVs, and multiple ground ISDs. Moreover, we formulated the SDMOP to minimize the total system delay by jointly optimizing computation offloading, service caching, communication resource allocation, computation resource allocation, and UAV trajectory. Then, we have proposed a J$\text{C}^5$A, which is proved to be converged, to solve the formulated optimization problem. Simulation results have clearly demonstrated that despite sacrificing a portion of ASCHR, the proposed J$\text{C}^5$A exhibits superior performance in ASD and APR to guarantee the computing services for ISDs, and it also shows superior adaptability to dense scenarios of the ICPS.

\bibliographystyle{IEEEtran}
\bibliography{manuscript.bbl}

\end{document}